\numberwithin{equation}{section}
\newtheorem{theorem}{Theorem}[section]
\newtheorem{proposition}[theorem]{Proposition}
\newtheorem{lemma}[theorem]{Lemma}
\colorlet{darkblue}{blue!70!black}
\numberwithin{equation}{section}
\newcommand{\R}{\mathbb{R}}
\newcommand{\Z}{\mathbb{Z}}
\newcommand{\btau}{\bar{\tau}}
\newcommand{\bq}{\bar{q}}
\newcommand{\bh}{\bar{h}}
\newcommand{\bx}{\bar{x}}
\newcommand{\vol}{\mathop{\textup{vol}}\nolimits}
\newcommand{\ML}{{\mathcal{M}_\ell}}
\newcommand{\p}{\partial}
\newcommand{\del}{\nabla}
\renewcommand{\Im}{\mathop{\textup{Im}}}
\begin{document}
\begin{spacing}{1.15}
\begin{titlepage}

\begin{center}
{\Large \bf
Free partition functions and an\\
averaged holographic duality}

\vspace*{6mm}

Nima Afkhami-Jeddi,$^1$ Henry Cohn,$^2$ Thomas Hartman,$^3$ and Amirhossein Tajdini$^3$

\vspace*{6mm}

$^1$\textit{Enrico Fermi Institute \& Kadanoff Center for Theoretical Physics,\\ University of Chicago, Chicago, Illinois, USA}

$^2$\textit{Microsoft Research New England, Cambridge, Massachusetts, USA \\}

$^3$\textit{Department of Physics, Cornell University, Ithaca, New York, USA\\}

\vspace{6mm}

{\tt nimaaj@uchicago.edu, cohn@microsoft.com, hartman@cornell.edu, at734@cornell.edu}

\vspace*{6mm}
\end{center}
\begin{abstract}
We study the torus partition functions of free bosonic CFTs in two
dimensions. Integrating over Narain moduli defines an ensemble-averaged
free CFT.  We calculate the averaged partition function and show that it
can be reinterpreted as a sum over topologies in three dimensions. This
result leads us to conjecture that an averaged free CFT in two dimensions
is holographically dual to an exotic theory of three-dimensional gravity
with $\textup{U}(1)^c \times \textup{U}(1)^c$ symmetry and a composite boundary graviton.
Additionally, for small central charge $c$, we obtain general constraints
on the spectral gap of free CFTs using the spinning modular bootstrap,
construct examples of Narain compactifications with a large gap, and find
an analytic bootstrap functional corresponding to a single self-dual boson.
\end{abstract}

\setcounter{tocdepth}{1}
\tableofcontents

\end{titlepage}
\end{spacing}

\begin{spacing}{1.15}
\addtocounter{page}{1}

\section{Introduction}

Among the simplest conformal field theories in two dimensions are those with
a $\textup{U}(1)^c_\textup{left} \times \textup{U}(1)_\textup{right}^c$ current algebra, where
$c$ is the central charge. These CFTs are theories of $c$ free bosons,
familiar from toroidal compactifications in string theory.

In this paper we will revisit an old problem: mapping the landscape of torus
partition functions for free CFTs. In the first part of the paper, we
undertake a systematic analysis of constraints on the spectrum using
techniques from the modular bootstrap
\cite{Rattazzi:2008pe,Hellerman:2009bu,Friedan:2013cba,Collier:2016cls}. In
\cite{HMR} it was shown that the modular bootstrap for free CFTs is related
to the sphere packing problem. However, this relation holds only for the
spinless version of the modular bootstrap, which in terms of the torus
modulus is restricted to $\tau = - \btau$. Here we will apply the full
modular bootstrap, with independent $\tau$ and $\btau$, which does not appear
to be related to sphere packing in general.

Instead, the full modular bootstrap for free CFTs is related to the geometric
problem of constructing Narain lattices with a large spectral gap, which is a
special case of sphere packing. A \emph{Narain lattice} is an even self-dual
lattice in $\mathbb{R}^{c,c}$, which famously defines a theory of $c$ compact
bosons \cite{Narain:1985jj}. The \emph{spectral gap} is the scaling dimension
$\Delta_1$ of the first nontrivial primary state in the CFT defined by this
lattice, and a Narain lattice is \emph{optimal} if it maximizes this gap
among all such lattices with a given central charge.

The modular bootstrap places an upper bound on the gap as a function of $c$.
We compute this bound numerically for $c \leq 15$, compare the bounds to
explicit Narain lattices, and discuss cases where the numerical bound is
saturated. We analytically solve the case $c=1$, where the optimal theory is
a self-dual boson, by exhibiting a suitable bootstrap functional. This is an
interesting example for the bootstrap because while some spinless bootstrap
problems are analytically tractable \cite{Mazac:2016qev}, there are
relatively few exact results with spin (see, however, recent progress in
\cite{Paulos:2019gtx,Mazac:2019shk}).

In the second part of the paper, we use methods of Siegel
\cite{MR10574,MR67930,MR67931,MR0271028} to study free boson partition
functions averaged over Narain moduli. These methods provide an
ensemble-averaged formula for the density of states in a free CFT, where the
ensemble is defined by the natural measure on the moduli space provided by
the Zamolodchikov metric (which in this case agrees with the Haar measure for
$\textup{O}(c,c)$ up to scaling).\footnote{An ensemble of symmetric orbifold CFTs based
on Siegel's technique of averaging over Narain lattices was considered by Moore in \cite{Moore:2015bba},
with a different holographic interpretation.} In particular,
the formula provides information about the spectrum
of an average Narain lattice in a large number of dimensions, and we use it
to prove that as $c \to \infty$, there are Narain lattices with $\Delta_1
\geq c/(2\pi e) + o(c)$.

This formula for $\Delta_1$ motivates the search for a holographic duality.
To explain why, let us first step back to review the status of holographic
duality for pure gravity in three dimensions, and the corresponding search
for a dual CFT. A holographic dual for pure 3d gravity would be a CFT with
Virasoro chiral algebra and $\Delta_1/c$ finite and nonzero in the limit as
$c \to \infty$. No such CFT has been found. Indeed, to find or exclude such a
theory is one of the primary motivations of the modular bootstrap program.
The interpretation of such a CFT, if it exists, is that the Virasoro
descendants of the vacuum are dual to Brown-Henneaux boundary gravitons in
AdS$_3$, and the primaries with dimension of order $c$ are dual to black
holes or other non-perturbative states.

In 2007, Maloney and Witten \cite{Maloney:2007ud} calculated the path
integral for 3d Einstein gravity with a torus boundary condition. It takes
the form
\begin{equation}\label{MW}
Z_\textup{MW}(\tau, \btau) = \sum_{\gamma \in \textup{SL}(2,\mathbb{Z})/\Gamma_\infty} \chi^\textup{Vir}_0(\gamma \tau) \bar{\chi}^\textup{Vir}_0(\gamma \btau),
\end{equation}
where $\chi_{ 0}^\textup{Vir}$ is the Virasoro vacuum character, $\gamma
\tau$ is an image of $\tau$ under $\textup{SL}(2,\mathbb{Z})$,  and the other notation
will be explained in Section~\ref{ss:bpf}. This sum over images under the modular group is
known as a Poincar\'e series.\footnote{For related applications of Poincar\'e
series in holography, see, for example,
\cite{Dijkgraaf:2000fq,Manschot:2007ha,Cheng:2011ay,Castro:2011zq,Jian:2019ubz}.} In the gravity theory,
it is a sum over topologies of the BTZ black hole. Maloney and Witten
computed the sum and found that the result does not make sense as a CFT,
because the density of states is continuous and non-unitary. There have been
various steps toward fixing the unitarity problem
\cite{Keller:2014xba,Benjamin:2019stq,Alday:2019vdr,Benjamin:2020mfz}, most
recently by including conical defects in the path integral, but the resulting
spectrum is still continuous and the status of pure 3d gravity as a quantum
theory is as yet unresolved.

Another wrinkle in this story is the recent discovery that pure gravity in
two dimensions, where it is known as Jackiw-Teitelboim (JT) gravity, is
holographically dual to random matrix theory
\cite{Cotler:2016fpe,Saad:2018bqo,Saad:2019lba,Saad:2019pqd}. This duality
provides a beautiful interpretation for a theory with a continuous spectrum
as an ensemble average over ordinary theories with discrete spectra. Since JT
gravity is the dimensional reduction of 3d gravity
\cite{Mertens:2018fds,Cotler:2018zff}, it seems increasingly likely that
averaging could also play a role in a putative dual to pure 3d gravity. On
the other hand, the notion of a random CFT in two dimensions is rather
mysterious: what is the ensemble? There is a natural measure on the moduli
space of CFTs connected by exactly marginal deformations, but a CFT dual to
pure 3d gravity would have no marginal operators. It would be isolated in the
space of CFTs. Therefore even if we had a large class of theories to average
over, it would be unclear how to define a measure.

We will show that if the Virasoro algebra is replaced by the $\textup{U}(1)^c$ current
algebra, then the sum over three-dimensional topologies can be carried out,
and it has a consistent interpretation as an average over Narain lattices. We
will refer to the bulk theory in three dimensions as \emph{$\textup{U}(1)$ gravity}.
It is perturbatively equivalent to $\textup{U}(1)^c \times \textup{U}(1)^c$ Chern-Simons
theory, with the action
\begin{equation}\label{sbulk}
S_\textup{CS}  =  \sum_{i=1}^c \int_{{\mathcal{M}}_3}( A^idA^i - \tilde{A}^id\tilde{A}^i) .
\end{equation}
We emphasize that this action is not supposed to define the non-perturbative
theory, and it is provisional in the sense that we will only check it on the
torus. For comparison, ordinary 3d gravity is perturbatively equivalent to an
$\textup{SL}(2,\mathbb{R}) \times \textup{SL}(2,\mathbb{R})$ Chern-Simons theory
\cite{Achucarro:1987vz,Witten:1988hc,Witten:2007kt}, with a boundary
condition inherited from gravity that differs from the usual one in gauge
theory (see, for example, \cite{Cotler:2018zff}). In addition to the
perturbative action \eqref{sbulk}, $\textup{U}(1)$ gravity comes with a prescription
to sum over three-dimensional topologies. This is part of the definition of
the theory. We will not attempt give a complete non-perturbative definition
in this paper, but for torus boundary conditions, the sum over topologies is
taken to be a sum over torus handlebodies, as in the Maloney-Witten path
integral \eqref{MW} for ordinary 3d gravity.

The theory of $\textup{U}(1)$ gravity is certainly not an ordinary gravitational
theory in three dimensions, so the lessons learned from this theory do not
necessarily carry over to more realistic theories. We do not expect it to
have black holes that dominate the canonical ensemble at $O(1)$ temperature.
On the other hand, $\textup{U}(1)$ gravity does have excitations equivalent to the
Brown-Henneaux boundary gravitons in ordinary 3d gravity. They are composites
built from the $\textup{U}(1)$ gauge fields, mimicking the Sugawara construction in
the boundary CFT. There are also higher spin composites, built from higher
products of the gauge fields, so $\textup{U}(1)$ gravity has some similarities to
higher spin gravity.\footnote{It differs from Vasiliev's theory of higher
spin gravity \cite{Bekaert:2005vh}, and there is no obvious relationship
between our results and previous examples of higher spin AdS/CFT
\cite{Sundborg:2000wp,Sezgin:2002rt,Klebanov:2002ja,Gaberdiel:2010pz}.}

The one-loop partition function for $\textup{U}(1)$ gravity on a solid torus is the
$\textup{U}(1)^c \times \textup{U}(1)^c$ vacuum character, denoted
$\chi_0(\tau)\bar{\chi}_0(\btau)$. Therefore the full partition function for
$\textup{U}(1)$ gravity on a torus is the Poincar\'e series
\begin{equation}
Z(\tau, \btau) = \sum_{\gamma \in \textup{SL}(2,\mathbb{Z})/\Gamma_\infty} \chi_0(\gamma \tau) \bar{\chi}_0(\gamma \btau) .
\end{equation}
We will compute the sum and show that the resulting spectrum agrees exactly
with Siegel's measure on random Narain lattices for any $c>2$. The agreement
between these two calculations is in fact a special instance of the
Siegel-Weil formula relating Eisenstein series to integrated theta functions
\cite{MR67930,MR67931,MR165033,MR223373}: the bulk calculation reduces to an
Eisenstein series, and the CFT calculation is an averaged theta function.

Thus, we conjecture that an averaged Narain CFT for $c>2$ is holographically
dual to a theory of $\textup{U}(1)$ gravity. We have demonstrated that this duality
holds at the level of the torus partition function, but we have not given a
fully non-perturbative definition of the bulk theory, which would require an
understanding of how to sum over topologies when the boundary condition is a
union of Riemann surfaces of arbitrary genus. If the duality is correct, then
it should also be possible to calculate ensemble-averaged quantities such as
$\langle Z(\tau_1,\btau_1) Z(\tau_2,\btau_2)\rangle$ from multi-boundary
wormholes in the bulk, as in the JT/random matrix duality
\cite{Saad:2019lba}. The connection to the Siegel-Weil formula also suggests
a way to generalize the calculations to higher genus.

Higher topology contributions to the gravitational path integral have played
a key role in recent efforts to address Hawking's information paradox
\cite{Penington:2019kki,Almheiri:2019qdq}. Whether these wormholes correspond
to an ensemble average is unknown, but in \cite{Marolf:2020xie}, it was
argued that spacetime wormholes in averaged theories can be reinterpreted by
doing the path integral with a boundary condition that selects an individual
member of the ensemble. It would be interesting to explore these alpha states
in $\textup{U}(1)$ gravity, where both sides of the duality are tractable.

In Section~\ref{s:prelim} we review background material on partition
functions with $\textup{U}(1)^c \times \textup{U}(1)^c$ symmetry. In Section~\ref{s:bootstrap},
we study bootstrap constraints and explicit Narain compactifications in low
dimensions. Finally, in Sections~\ref{s:averaging}--\ref{s:holography} we explore
averaging over Narain lattices and the holographic duality. The bootstrap
section is largely independent of the later sections, except as motivation,
so it can be read independently.

As this work was nearing completion, we learned that related ideas regarding
averaging over Narain lattices were arrived at independently by Maloney and
Witten \cite{Maloney:2020nni}.

\section{Preliminaries}\label{s:prelim}

\subsection{Partition functions}

The partition function of a compact, unitary 2d CFT is
\begin{equation}
Z(\tau, \btau) = \sum_\textup{states} q^{h - c/24} \bq^{\bh - c/24},
\end{equation}
where $q = e^{2\pi i \tau}$, $\bq  = e^{-2\pi i \btau}$, $h$ and $\bh$ are
non-negative conformal weights of each state, and $\tau$ and $-\btau$ are
independent complex numbers in the upper half-plane. In a theory with
$\textup{U}(1)^c_\textup{left} \times \textup{U}(1)^c_\textup{right}$ current algebra, the
partition function can be expressed as a sum over primaries via
\begin{equation}\label{zspin}
Z(\tau, \btau) = \sum_{h,\bh} d_{h,\bh} \chi_h(\tau) \bar{\chi}_{\bh}(\btau) ,
\end{equation}
where $\chi_h$ denotes the $\textup{U}(1)^c$ character
\begin{equation}
\chi_h(\tau) =\frac{q^h}{\eta(\tau)^c}  ,
\end{equation}
with $\eta$ the Dedekind eta function $\eta(\tau) = q^{1/24}
\prod_{n=1}^\infty(1-q^n)$ and $\bar{\chi}_{\bh}(\btau) =
\chi_{\bh}(-\btau)$, and the degeneracy $d_{h,\bh}$ is the number of
primaries with conformal weights $h$ and $\bh$. There is a unique vacuum
state with $h = \bh = 0$ and $d_{0,0}=1$.

We assume the partition function is modular invariant. In other words, $Z$
satisfies the identity
\begin{equation}
Z( \gamma \tau , \gamma \btau) = Z(\tau, \btau)
\end{equation}
for all $\gamma \in \textup{SL}(2,\Z)$, where $\gamma = \begin{pmatrix} p & q \\ r & s
\end{pmatrix}   \in \textup{SL}(2, \Z)$ acts as
\begin{equation}
(\gamma \tau, \gamma \btau) = \left( \frac{p \tau + q}{r \tau + s} , \ \frac{p \btau + q}{r \btau + s} \right) .
\end{equation}
The group $\textup{SL}(2,\Z)$ is generated by $S$ and $T$, where
\begin{equation}
S(\tau) = -1/\tau \qquad\text{and}\qquad T(\tau)=\tau+1.
\end{equation}
The scaling dimension and spin of a state are
\begin{equation}
\Delta = h+\bh \qquad\text{and}\qquad \ell = h-\bh,
\end{equation}
respectively. Invariance under $T$ requires that $\ell \in \Z$. Thus, we can
also write the partition function as
\begin{equation}\label{zld}
Z(\tau, \btau) = \sum_{\ell=-\infty}^{\infty} \int_{|\ell|}^\infty d\Delta\, \rho_\ell(\Delta) \chi_{\ell, \Delta}(\tau, \btau) ,
\end{equation}
where
\begin{equation}
\chi_{\ell, \Delta}(\tau, \btau) = \chi_{(\Delta + \ell)/2}(\tau) \bar{\chi}_{(\Delta -\ell)/2}(\btau)   .
\end{equation}
The density of states $\rho_\ell(\Delta)$ is a sum of delta functions with
positive integer coefficients, and the unitarity bound $h\geq 0$, $\bh \geq
0$ implies that $\rho_\ell(\Delta)$ has support only for $\Delta \geq |\ell|$.

\subsection{Spinning modular bootstrap}\label{ss:spinningbootstrap}

The modular bootstrap is a version of the conformal bootstrap applied to 2d
partition functions. Following
\cite{Rattazzi:2008pe,Hellerman:2009bu,Friedan:2013cba,Collier:2016cls}, we
write the condition $Z(\tau, \btau) - Z(-1/\tau, -1/\btau) = 0$ for
$S$-invariance as
\begin{equation}\label{crossphi}
\sum_{h,\bh} d_{h,\bh}\Phi_{h,\bh}(\tau, \btau) = 0 ,
\end{equation}
where we symmetrize $h$ and $\bh$ to obtain
\begin{equation}
\Phi_{h,\bh} = \chi_h(\tau)\bar{\chi}_{\bh}(\btau) + \bar{\chi}_h(\btau) \chi_{\bh}(\tau) - \chi_h(-1/\tau) \bar{\chi}_{\bh}(-1/\btau) - \bar{\chi}_h(-1/\btau)\chi_{\bh}(-1/\tau) .
\end{equation}
Suppose $\omega$ is a linear functional acting on functions of
$(\tau,\btau)$, such that
\begin{equation}\label{opos1}
\omega(\Phi_{0,0}) > 0
\end{equation}
and
\begin{equation}\label{opos2}
\omega(\Phi_{h,\bh}) \geq 0
\end{equation}
whenever $h\geq 0$, $\bh \geq 0$, $h - \bh \in \Z$, and $h + \bh \geq
\Delta_\textup{gap}$ for some constant $\Delta_\textup{gap}$. Then every CFT
must have a primary state with scaling dimension below $\Delta_\textup{gap}$,
because otherwise
\begin{equation}
\sum_{h,\bh} d_{h,\bh}\omega(\Phi_{h,\bh})  \ge d_{0,0} \omega(\Phi_{0,0}) > 0,
\end{equation}
which contradicts the crossing equation \eqref{crossphi}.

This method can be applied to any chiral algebra. Our focus is on theories
with $\textup{U}(1)^c \times \textup{U}(1)^c$ symmetry, for which the space of functionals can
be found by the usual logic with some minor adjustments. Under $S$, the
$\textup{U}(1)^c$ characters transform by a Fourier transform in $\R^c$: for $x \in
\R^c$,
\begin{equation}
\chi_{|x|^2/2}(-1/\tau) = \int_{\R^c} dk\, e^{-2\pi i k \cdot x} \chi_{|k|^2/2}( \tau) .
\end{equation}
Thus, $S$ acts on the product $\chi_h(\tau) \bar{\chi}_{\bh}(\btau)$ as a
Fourier transform in $\R^{2c}$ with the identifications $h =
\frac{1}{2}|x|^2$ and $\bh = \frac{1}{2}|\bx|^2$ for $(x,\bx) \in
\big(\R^{c}\big)^2 = \R^{2c}$. It follows that under these identifications,
the function $\omega(\Phi_{h,\bh})$ is always an eigenfunction of the Fourier
transform in $\R^{2c}$ with eigenvalue $-1$. Furthermore, every $-1$
eigenfunction that is invariant under exchanging $x$ and $\bx$ occurs as
$\omega(\Phi_{h,\bh})$ for some $\omega$, as one can check using the
derivative basis given in \eqref{deriv:basis} below.

In principle, the best bootstrap bound on $\Delta_1$ is obtained by
optimizing over this space of functionals. This is usually difficult, so it
becomes necessary to truncate the problem and use a computer to search over a
finite dimensional space. We restrict to the space spanned by the derivative
functionals
\begin{equation} \label{deriv:basis}
\omega =  \left.\frac{\partial^m }{\partial \tau^m} \frac{\partial^n}{\partial\btau^n}\right|_{\tau = - \btau = i }
\end{equation}
with $m + n \leq K$. The resulting eigenfunctions are spanned by
\begin{equation}
f_{m,n}(h, \bh) = \left(L_{m}^{(c/2-1)}(4\pi h)L_n^{(c/2-1)}(4\pi \bh)
+
L_{n}^{(c/2-1)}(4\pi h)L_m^{(c/2-1)}(4\pi \bh)\right)e^{-2\pi(h + \bh)}
\end{equation}
with $L_m^{(\nu)}(x)$ a generalized Laguerre polynomial, $m > n \geq 0$, $m +
n \leq K$, and $m+n$ odd. For each $(m,n)$, these functions have one discrete
label $h - \bh$, which we can take to be a non-negative integer, and one
continuous label $\Delta = h + \bh$.

\subsection{Narain compactifications}\label{ss:naraincomp}
A Narain lattice $\Lambda$ is an even, self-dual lattice of signature
$(c,\bar{c})$. For a review of the role of Narain lattices in conformal field
theory and string theory, see \cite{Polchinski:1998rq,Giveon:1994fu}. We consider Narain
lattices of signature $(c,c)$, where for $(x,y),(x',y') \in \R^{c,c} = (\R^c)^2$ the inner
product is $(x,y) \cdot (x', y') = x\cdot x' - y \cdot y'$. A Narain lattice
defines a CFT of $c$ free bosons, with the partition function
\begin{equation}
Z_\Lambda(\tau,\btau) = \frac{1}{\eta(\tau)^c \eta(-\btau)^c} \sum_{(x,y) \in \Lambda} q^{|x|^2/2} \bq^{|y|^2/2} .
\end{equation}
The condition that $\Lambda$ is even ensures that the CFT states have integer
spin, i.e., the partition function is invariant under $T$. Then the condition
that $\Lambda$ is self-dual implies that $Z_{\Lambda}$ is also invariant
under $S$ and therefore under the full modular group. The primary fields
correspond to vectors $(x,y) \in \Lambda$, with scaling dimension and spin
\begin{equation}
\Delta = \frac{1}{2}(|x|^2 + |y|^2)  \qquad\text{and}\qquad 
\ell = \frac{1}{2}(|x|^2 - |y|^2) .
\end{equation}
For each $c$, starting with a Narain lattice $\Lambda_0$, we can reach any
other Narain lattice by acting with an element of $\textup{O}(c,c)$  (see
\cite[Chapter~V]{Serre} or \cite[Chapter~II, \S5]{MH}). 
The CFT is invariant under the $T$-duality group $\textup{O}(\Lambda_0) \cong \textup{O}(c,c,\Z)$,
defined as the discrete subgroup of $\textup{O}(c,c,\R)$ which preserves the original lattice, and
the CFT is also unaffected by $\textup{O}(c)
\times \textup{O}(c)$ rotations acting individually on $x$ and $y$.
Therefore the moduli space of Narain CFTs is the quotient
\begin{equation}
\big( \textup{O}(c) \times \textup{O}(c) \big) \backslash \textup{O}(c,c) / \textup{O}(\Lambda_0).
\end{equation}
In the sigma model, this moduli space is parameterized by the metric and flux
on the target torus.

Consider $c=1$, the theory of a single compact boson of radius $R$. The
partition function is
\begin{equation}\label{zselfdual}
Z_R(\tau, \btau) = \frac{1}{\eta(\tau) \eta(-\btau)}\sum_{m,n \in \Z} q^{(m/R + nR/2)^2/2} \bq^{(m/R-nR/2)^2/2} .
\end{equation}
The theory is invariant under the $T$-duality $R \mapsto 2/R$. The spectrum
of primary operators is
\begin{equation}
\Delta_{m,n} = m^2/R^2 + n^2R^2/4 ,
\end{equation}
so the optimal Narain compactification for $c=1$ --- i.e., the CFT with the
largest gap between the vacuum state and the first nontrivial primary --- is
the self-dual boson, with $R = \sqrt{2}$. It has $\Delta_1 = 1/2$ for its
spectral gap.\footnote{At rational values of $R^2$ there are additional
conserved currents, so the chiral algebra is enhanced. 
The same happens at special values of the moduli for any
$c$. However we can still decompose states under $\textup{U}(1)^c \times \textup{U}(1)^c$, and
we will do this throughout the paper, so that the chiral characters are
independent of the moduli.}

More generally, the spectral gap of a Narain lattice $\Lambda$ is given by
\begin{equation}
\Delta_1 = \min_{(x,y) \in \Lambda \setminus\{(0,0)\}} \frac{|x|^2+|y|^2}{2}.
\end{equation}
In other words, we can form a sphere packing in ordinary Euclidean space
by centering spheres of radius
$\sqrt{2\Delta_1}$ at the points of $\Lambda$, with one sphere per unit
volume in space because all Narain lattices have determinant $1$. Maximizing
$\Delta_1$ amounts to maximizing the packing density. Thus, CFTs consisting
of free bosons correspond to a special case of the sphere packing problem, in
which the spheres must be centered at the points of a Narain lattice.

\section{Upper bounds on the spectral gap}\label{s:bootstrap}

\subsection{Numerical bootstrap bounds}
We use the spinning modular bootstrap method described in
Section~\ref{ss:spinningbootstrap}, together with standard computational
tools such as the semidefinite program solver SDPB
\cite{Simmons-Duffin:2015qma}, to place an upper bound on the spectral gap
$\Delta_1$ in theories with $\textup{U}(1)^c \times \textup{U}(1)^c$ current algebra. More
details of our implementation are in Appendix~\ref{appendix:sdpb}.

We denote the bootstrap bound at central charge $c$ and truncation order $K$
by $\Delta_{1}^{(K)}(c)$. All CFTs with this chiral algebra have $\Delta_1
\leq \Delta_1^{(K)}(c)$, and the bounds improve as $K \to \infty$. The
numerical results for $K=25$ are plotted in Figure~\ref{fig:spinplot}. The
red and green lines are included as a guide to the eye. To see the slight
nonlinearities in the bound, the piecewise linear function min$\left(
\frac{c+2}{6}, \frac{c+4}{8}\right)$ is subtracted from $\Delta_1^{(K)}(c)$
in Figure~\ref{fig:diffPW}. This figure also shows various values of $K$, so
that it can be used to judge whether the bound has converged. Some values
have converged better than others, and even some of the low-lying results may
not have converged. In particular the bounds around $c\sim 1.5$ and $c \sim
3$ are still changing appreciably at $K = 25$, so the actual bounds could be
significantly stronger. Note that larger values of the central charge require
a higher $K$ to get a strong bound, so it is not computationally feasible to
find useful bounds from this method for $c$ much larger than $15$. In
Figure~\ref{fig:diffS}, we compare to the spinless bootstrap bound for
$\textup{U}(1)^c \times \textup{U}(1)^c$ obtained in \cite{paper1}. For $c\neq 4$, the spinning
bound is strictly stronger in this range.

\begin{figure}
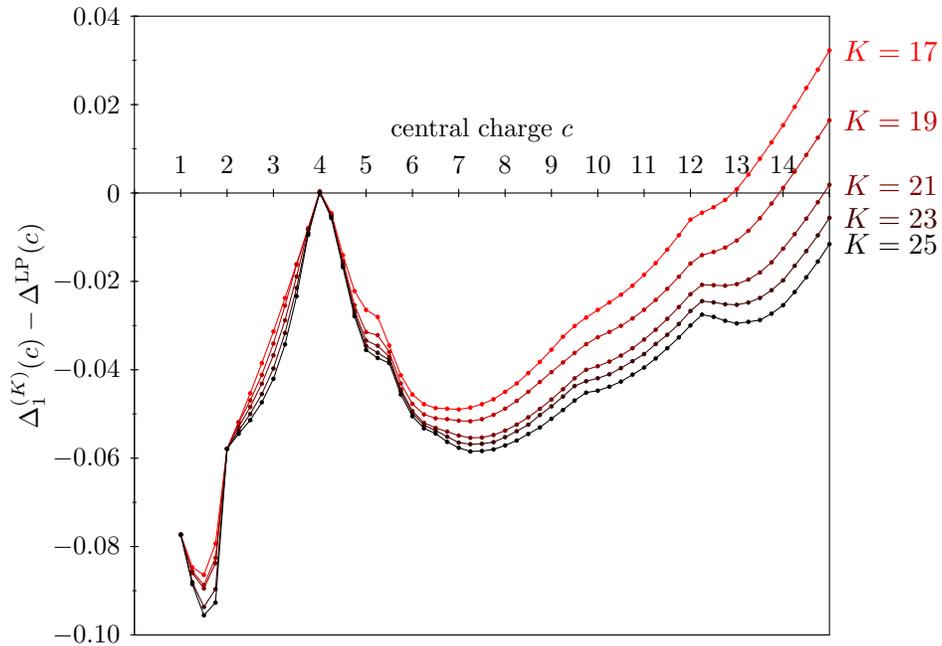

\begin{center}

\end{center}
\caption{Comparison of the spinning bootstrap to the spinless bootstrap bound $\Delta^\textup{LP}(c)$.\label{fig:diffS}}
\end{figure}

In Figure~\ref{fig:diffPW} we see that there are three points where the
spinning bound appears to converge to a known CFT, all sitting on the line
$\Delta = \frac{c+2}{6}$. The following upper bounds are obtained at
truncation order $K =19$:
\begin{equation}
\begin{split}
c&=1 : \qquad \Delta_1 < 1/2 + 2\times 10^{-51}\\
c&=2 :\qquad  \Delta_1 < 2/3 +  2\times 10^{-11} \\
c&=4 :\qquad \Delta_1 <   1 + 10^{-4}
\end{split}
\end{equation}
At $c=1$, the CFT that saturates the bound is a compact boson at the
self-dual radius, discussed in Section~\ref{ss:naraincomp}. This theory is
equivalent to the $\textup{SU}(2)_1$ WZW model. At $c=2$, the bound is saturated  by
the $\textup{SU}(3)_1$ WZW model. This theory has a realization as two bosons
compactified on a 2-torus at the three-fold symmetric point in moduli
space.  At $c=4$, as discussed in \cite{Collier:2016cls,paper1}, it is
saturated by 8 free fermions with the diagonal GSO projection, or
equivalently the $\textup{SO}(8)_1$ WZW model.

The sharp bound for $c=4$ follows automatically from the known bound using
the spinless modular bootstrap \cite{HMR}, and we will prove the bound for
$c=1$ below. That leaves the $c=2$ case as an open problem for the analytic bootstrap. It seems
conceptually similar to the sharpness of the spinless bound for $c=1$, and
both of these cases resist all known techniques.

The line $\Delta = \frac{c+2}{6}$ has appeared in previous modular bootstrap
studies \cite{Collier:2016cls,Bae:2017kcl}. It is the gap to the first
primary in the WZW models for
\begin{equation}
\textup{SU}(2)_1, \ \textup{SU}(3)_1,\  (G_2)_1,\  \textup{SO}(8)_1,\  (F_4)_1 , \ (E_6)_1 , \ (E_7)_1 ,
\end{equation}
with
\begin{equation}
c=1,\ 2,\  \frac{14}{5},\  4,\  \frac{26}{5},\  6,\  7,
\end{equation}
respectively. We have already encountered these theories at $c=1,2,4$. The
other theories on the list are consistent with our bound. Their partition
functions can be found in \cite{Mathur:1988na,Kiritsis:1988kq}. The $(G_2)_1$
WZW model does not have a $\textup{U}(1)^c_\textup{left} \times \textup{U}(1)^c_\textup{right}$
current algebra. When $c$ is not an integer, this algebra does not even make
sense, but we can still ask whether the partition function can be expanded as
in \eqref{zspin} with positive coefficients.  In the $(G_2)_1$ theory it
cannot, so the bound does not apply.  The $(E_6)_1$ and $(E_7)_1$ theories do
have the required current algebra. These theories have gap $\Delta_1 =
\frac{c+2}{6}$ with respect to the full chiral algebra, but gap $\Delta_1 =
1$ with respect to the  $\textup{U}(1)^c_\textup{left} \times \textup{U}(1)^c_\textup{right}$
subalgebra, because there are additional currents in the vacuum module that
are primary under this subalgebra.  Therefore they fall below our bound. The
situation for $(F_4)_1$ is similar. (This theory has no $\textup{U}(1)^c_\textup{left}
\times \textup{U}(1)^c_\textup{right}$ subalgebra, because $c$ is not an integer, but
does have a positive expansion of the form \eqref{zspin} with fractional
coefficients.)

\subsection{Analytic functional for the self-dual boson}

We will now construct an analytic functional to prove that every compact,
unitary 2d CFT with $c = \bar{c} = 1$ and current algebra $\textup{U}(1) \times \textup{U}(1)$
has a non-vacuum primary state with $\Delta_1 \leq \frac{1}{2}$. In other
words, the self-dual boson is optimal for this problem. This result may be
obvious, but the method is novel and may lend insight into more complicated
bootstrap problems with nontrivial spin dependence.

We can restate the requirements of Section~\ref{ss:spinningbootstrap} for the
spinning modular bootstrap in terms of Fourier eigenfunctions as follows. To
prove an upper bound of $\Delta_1 < \Delta_\textup{gap}$, we need a function
$f \colon \R^c \times \R^c \to \R$ such that $\widehat{f}=-f$, $f(0,0) > 0$,
and $f(x,\bx) \geq 0$ whenever $|x|^2 - |\bx|^2 \in 2\mathbb{Z}$ and $|x|^2 +
|\bx|^2 \geq 2\Delta_\textup{gap}$. For a rigorous proof, $f$ should decay
quickly enough; for example, a Schwartz function suffices. Without loss of
generality, we can assume that $f(x,\bx)$ depends only on $|x|^2$ and
$|\bx|^2$ and is invariant under exchanging $x$ and $\bx$.

The optimal choice of $f$ will have $f(0,0)=0$. We conjecture that replacing
the condition $f(0,0)>0$ with $f(0,0) \ge 0$ is enough to obtain $\Delta_1
\le \Delta_\textup{gap}$ as long as $f$ is not identically zero, but we do
not know how to prove it. We will first construct a function satisfying
$f(0,0)=0$ and $\Delta_\textup{gap}=\frac{1}{2}$ exactly, and then we will
approximate it with functions satisfying $f(0,0)>0$ and
$\Delta_\textup{gap}>\frac{1}{2}$ to obtain a rigorous proof.

To construct $f$, we begin with a convex subset $R$ of $\R^2$ that is
symmetric about the origin (in other words, $-R=R$). Let $\chi_R$ be the characteristic
function of $R$, i.e.,
\begin{equation}
\chi_R(x,\bx) = \begin{cases} 1 & \textup{if $(x,\bx) \in R$, and}\\
0 & \textup{otherwise},
\end{cases}
\end{equation}
and let $g = \chi_R * \chi_R$ be the convolution of $\chi_R$ with itself, so
that $g$ has support in $2R$. Then $\widehat{g} = \widehat{\chi_R}^2$, which
is nonnegative everywhere because $\widehat{\chi_R}$ is real-valued (which
holds since $R=-R$).  These functions satisfy $\widehat{g}(0,0) = \vol(R)^2$
and $g(0,0) = \vol(R)$, where here volume means area in $\R^2$.

Let $f = \widehat{g} - g$, so that $\widehat{f}=-f$. Then $f(0,0) \ge 0$ iff
$\vol(R) \ge 1$. We also want $f$ to satisfy $f(x,\bx) \ge 0$ whenever
$x^2+\bx^2 \ge 1$ and $x^2 - \bx^2 \in 2\Z$. We know that $f(x,\bx) \ge 0$
whenever $(x,\bx) \not\in 2R$, because $g$ vanishes outside $2R$ and
$\widehat{g}$ is always nonnegative. Thus it suffices to find $R$ such that
\begin{equation}
\{ (x,\bx) \in \R^2 : \text{$x^2+\bx^2 \ge 1$ and $x^2 - \bx^2 \in 2\Z$}\} \subseteq \R^2 \setminus 2R
\end{equation}
and $\vol(2R) \ge 4$. We can satisfy these conditions by taking $R$ to be a
square, namely the convex hull of $(\pm 1/\sqrt{2},0)$ and $(0,\pm
1/\sqrt{2})$, as shown in Figure~\ref{figure:square}. Thus, we have obtained
an optimal eigenfunction $f$, which in fact satisfies $f(x,\bx) \ge 0$ for
far more points $(x,\bx)$ than required.

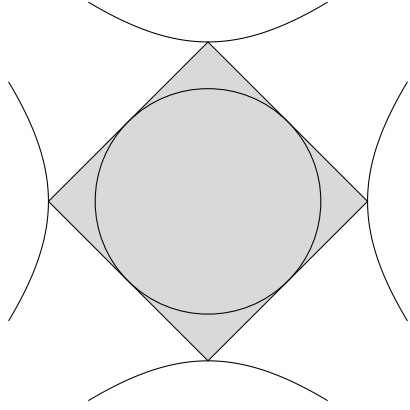
\begin{figure}
\begin{center}
\begin{tikzpicture}[x=1.5cm,y=1.5cm,rotate=45]
\fill[black!15] (-1,1)--(1,1)--(1,-1)--(-1,-1)--(-1,1);
\draw (0,0) circle (1);
\draw (-1,1)--(1,1)--(1,-1)--(-1,-1)--(-1,1);
\draw [domain=0.5:2, samples=128]
 plot ({\x}, {1/\x} );
 \draw [domain=0.5:2, samples=128]
 plot ({-\x}, {1/\x} );
 \draw [domain=0.5:2, samples=128]
 plot ({\x}, {-1/\x} );
 \draw [domain=0.5:2, samples=128]
 plot ({-\x}, {-1/\x} );
\end{tikzpicture}
\end{center}
\caption{The square $2R$ that attains a sharp bound, together with the unit circle and the hyperbolas $x^2 - \bx^2  = \pm 2$.}
\label{figure:square}
\end{figure}

We can write down $g$ and $\widehat{g}$, and hence also $f$, in closed form
by rotating $45^\circ$ and separating variables, to take advantage of
decomposing the square $R$ as a product of two intervals. We find that
\begin{equation}
g(x,\bx) =  \left(1-\left|\frac{x+\bx}{\sqrt{2}}\right|\right)\chi_{[-1,1]}\mathopen{}\left(\frac{x+\bx}{\sqrt{2}}\right)\mathclose{}
\left(1-\left|\frac{x-\bx}{\sqrt{2}}\right|\right)\chi_{[-1,1]}\mathopen{}\left(\frac{x-\bx}{\sqrt{2}}\right)\mathclose{}
\end{equation}
and
\begin{equation}
\widehat{g}(x,\bx) =
\left(
\frac{\sqrt{2}\sin \frac{\pi(x + \bx)}{\sqrt{2} }}{\pi(x + \bx) }
\frac{\sqrt{2}\sin \frac{\pi(x - \bx)}{\sqrt{2} }}{\pi(x - \bx) }
\right)^2  .
\end{equation}
The function $f = \widehat{g}-g$ does indeed vanish at all the points
$(x,\bx) = \frac{1}{\sqrt{2}}(m+n, m-n)$ with $m,n \in \mathbb{Z}$, as it
should by \eqref{zselfdual}.

If we wish to achieve $f(0)>0$ while relaxing the constraint $x^2+\bx^2\ge 1$
to $x^2+\bx^2 \ge (1+\varepsilon)^2$ with $\varepsilon>0$, we cannot simply
replace $R$ with $(1+\varepsilon)R$, because the enlarged set
$2(1+\varepsilon)R$ would overlap with the hyperbolas $x^2=\bx^2 = \pm 2$.
Instead, we can shave off the corners of $(1+\varepsilon)R$ at $45^\circ$
angles to obtain on octagon $S_\varepsilon$ such that $2S_\varepsilon$
strictly avoids the hyperbolas, as shown in Figure~\ref{figure:octagon}. The
decrease in area from shaving the corners is quadratic in $\varepsilon$, and
thus $\vol(S_\varepsilon)>4$ when $\varepsilon$ is small. This construction
therefore comes arbitrarily close to $\Delta_\textup{gap}=\frac{1}{2}$ while
keeping $f(0,0)>0$.

\begin{figure}
\begin{center}
\begin{tikzpicture}[x=1.5cm,y=1.5cm,rotate=45]
\fill[black!15] (0.875,1.075)--(1.075,0.875)--(1.075,-0.875)--(0.875,-1.075)--(-0.875,-1.075)--(-1.075,-0.875)--(-1.075,0.875)--(-0.875,1.075)--(0.875,1.075);
\draw (0,0) circle (1);
\draw (0.875,1.075)--(1.075,0.875)--(1.075,-0.875)--(0.875,-1.075)--(-0.875,-1.075)--(-1.075,-0.875)--(-1.075,0.875)--(-0.875,1.075)--(0.875,1.075);
\draw [domain=0.5:2, samples=128]
 plot ({\x}, {1/\x} );
 \draw [domain=0.5:2, samples=128]
 plot ({-\x}, {1/\x} );
 \draw [domain=0.5:2, samples=128]
 plot ({\x}, {-1/\x} );
 \draw [domain=0.5:2, samples=128]
 plot ({-\x}, {-1/\x} );
\end{tikzpicture}
\end{center}
\caption{An octagon $2S_\varepsilon$ such that $f(0,0)>0$, together with the unit circle and the hyperbolas $x^2 - \bx^2  = \pm 2$.}
\label{figure:octagon}
\end{figure}
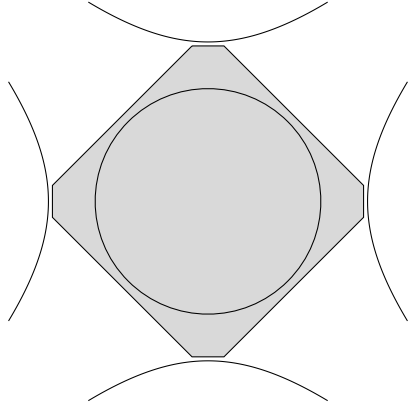

The only remaining issue is that $f$ decays slowly. To fix this issue, we can
use a standard mollification argument, as in the proof of Lemma~2.2 in
\cite{CG}. Specifically, for each $\delta>0$ we can replace $f$ with a
Schwartz function $f_\delta$ such that $f_\delta$ converges pointwise to $f$
as $\delta \to 0$, $f_\delta(x,\bx) \ge 0$ whenever $(x,\bx) \not\in
(1+\delta) 2 S_\varepsilon$, and $\widehat{f_\delta}(x,\bx) \le 0$ whenever
$(x,\bx) \not\in (1+\delta) 2 S_\varepsilon$. Then the eigenfunction
$f_\delta - \widehat{f_\delta}$ has all the desired properties when $\delta$
is small enough.

Our construction of an optimal eigenfunction for $c=1$ is essentially
equivalent to the optimal auxiliary function for the one-dimensional sphere
packing bound from \cite[p.~695]{CE}: $g$ consists of two orthogonal copies
of the auxiliary function, at $45^\circ$ angles from the coordinate axes.
This relationship raises the question of whether the $c=2$ eigenfunction
might be related to an auxiliary function for $2$-dimensional sphere packing
in a similar way. It seems plausible that they are related somehow, but we
cannot pin down a specific relationship.

\subsection{Seeking optimal Narain lattices}

\begin{table}
\caption{Putatively optimal Narain compactifications, along with the spinning modular bootstrap
bound and the best lattice sphere packing known in $\R^{2c}$ (without the Narain condition).}
\label{table:narain}
\begin{center}
\begin{tabular}{cccccc}
\toprule
$c$ & $\Delta_1$ & $\Delta_{1}^{(25)}$ & Name & $\mkern-12mu$Best lattice packing\\
\midrule 
$1$ & $1/2$ & $1/2$ & $\textup{SU}(2)_1$ WZW$\phantom{1}$ & $\mkern-24mu\,\,\,\,\sqrt{1/3} = 0.5773\dotsc$\\
$2$ & $2/3$ & $0.6667$ & $\textup{SU}(3)_1$ WZW$\phantom{1}$ & $\mkern-24mu\,\,\,\,\sqrt{1/2} = 0.7071\dotsc$\\
$3$ & $3/4$ & $0.8227$ & $\textup{SU}(4)_1$ WZW$\phantom{1}$ & $\mkern-24mu\,\,\,\,\sqrt[6]{1/3} = 0.8326\dotsc$\\
$4$ & $1$ & $1$ & $\textup{SO}(8)_1$ WZW$\phantom{1}$ & $\mkern-24mu1\,\,\,\,\,\,$\\
$5$ & $1$ & $1.0963$ & $\textup{SO}(10)_1$ WZW $\phantom{1}$ & $\mkern-24mu\,\,\sqrt[10]{4/3} = 1.0291\dotsc$\\
$6$ & $\sqrt{4/3} = 1.1547\dotsc$ & $1.2103$ & Coxeter-Todd $\phantom{1}$ & $\mkern-24mu\,\,\,\,\,\sqrt{4/3} = 1.1547\dotsc$\\
$7$ & $\sqrt{4/3} = 1.1547\dotsc$ & $1.3300$ &  & $\mkern-24mu\sqrt[14]{64/3} = 1.2443\dotsc$\\
$8$ & $\,\,\,\,\,\,\,\sqrt{2} = 1.4142\dotsc$ & $1.4556$ & Barnes-Wall $\phantom{1}$ & $\mkern-24mu\,\,\,\,\,\,\,\,\,\,\,\,\,\sqrt{2} = 1.4142\dotsc$\\
\bottomrule
\end{tabular}
\end{center}
\end{table}

As discussed in Section~\ref{ss:naraincomp}, there is a unique Narain lattice
for each $c$, up to the action of the orthogonal group $\textup{O}(c,c)$. Therefore we
can try to find optimal Narain lattices by optimizing over moduli. This
optimization problem is highly non-convex, with many local optima. We
implemented a simple heuristic numerical algorithm, which starts from an
arbitrary element of $\textup{O}(c,c)$ and obtains a local optimum via hill climbing
under small, random perturbations. This algorithm does not perform well when
$c$ is large, but it gives good results for $c \le 8$. We used it to generate
a tentative list of optimal $(c,c)$ Narain compactifications, shown in
Table~\ref{table:narain}.

For $c\leq 5$ the best lattices we found are equivalent to WZW models at
level one. At $c=6$ or $8$, they turn out to be the Coxeter-Todd and
Barnes-Wall lattices, respectively. These lattices are scaled to have
irrational scaling dimensions, so they do not correspond to any WZW model.
They also happen to be the best sphere packings known in dimensions twelve or
sixteen, which means these cases cannot be improved without setting a new
record for the sphere packing density. The best Narain lattices match the
spinning modular bootstrap for $c=1$, $2$ (conjecturally), and $4$, but
seemingly not for $3$ or $5$ through $8$.  See Appendix~\ref{appendix:narain}
for further details and discussion of the Coxeter-Todd and Barnes-Wall
lattices.

\section{Averaging over Narain lattices}\label{s:averaging}

In this section, we review how Siegel computed the expected number of primary
states with specified scaling dimensions and spins in a random Narain CFT
\cite{MR67930}.\footnote{Of course, Siegel did not express his computation in
these terms in his 1951 paper.} Let $\Lambda_0$ be a Narain lattice in
$\R^{c,c}$, so that the space of all Narain lattices in $\R^{c,c}$ is the
orbit of $\Lambda_0$ under $\textup{O}(c,c)$, and let $\textup{O}(\Lambda_0)$ be the discrete subgroup
of $\textup{O}(c,c)$ that preserves $\Lambda_0$. Then the space of Narain lattices is
the quotient $\textup{O}(c,c)/\textup{O}(\Lambda_0)$.

The canonical measure on moduli spaces of CFTs is the Zamolodchikov metric
\cite{Zamolodchikov:1986gt}. For Narain CFTs, this measure is invariant under
$\textup{O}(c,c)$ and therefore agrees with the Haar measure on $\textup{O}(c,c)/\textup{O}(\Lambda_0)$,
up to scaling (see, for example, \cite{Moore:2015bba} for a detailed discussion). Thus, we can normalize to obtain a canonical probability measure
on Narain lattices if $\textup{O}(c,c)/\textup{O}(\Lambda_0)$ has finite volume under the Haar
measure. When $c=1$, the volume is infinite,\footnote{The metric on the
moduli space of a single free boson is proportional to $dR^2/R^2$, where $R$
is the target radius.} but it turns out to be finite for $c \ge 2$. This
finiteness can be checked directly by building a fundamental domain; it is
also a special case of the theorem of Borel and Harish-Chandra \cite{BHC}
that an arithmetic subgroup of a semisimple algebraic group has a finite
volume quotient (note that the identity component of $\textup{O}(c,c)$ is semisimple
iff $c \ge 2$). Thus, the notion of a uniformly random Narain lattice makes
sense for $c \ge 2$ but not $c=1$.

Narain lattices also behave unusually for $c=2$: the number of primary states
in a Narain CFT with spin $0$ and scaling dimension at most $\Delta$ grows
like a multiple of $\Delta^{c-1}$ as $\Delta \to \infty$ when $c>2$, but
there is an extra factor of $\log \Delta$ when $c=2$ (see Theorem~7 in
\cite{Heath-Brown}). In other words, Narain CFTs have excess spin $0$ states
when $c=2$, which leads to certain divergences. Siegel's theorem therefore
assumes $c>2$.

\begin{theorem}[Siegel] \label{theorem:siegel}
If $c>2$, then the density of non-vacuum primary states of spin $\ell$ and
scaling dimension $\Delta$ in a random Narain CFT of signature $(c,c)$ is
given by
\begin{equation}
\frac{2 \pi^c \sigma_{1-c}(\ell)}{\Gamma(c/2)^2 \zeta(c)} (\Delta^2-\ell^2)^{c/2-1}
\end{equation}
for $\Delta \ge |\ell|$ and $0$ otherwise. In other words, for each
measurable subset $A$ of $[|\ell|,\infty)$, the expected number of non-vacuum
primary states in a random Narain CFT with spin $\ell$ and scaling dimension
$\Delta \in A$ is
\begin{equation}
 \frac{2 \pi^c \sigma_{1-c}(\ell)}{\Gamma(c/2)^2 \zeta(c)} \int_A d\Delta \,  (\Delta^2-\ell^2)^{c/2-1}.
\end{equation}
\end{theorem}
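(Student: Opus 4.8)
The plan is to compute, for an arbitrary Schwartz function $\varphi\colon\R^{2c}\to\R$ depending only on $|x|^2$ and $|y|^2$ for $(x,y)\in\R^{c,c}=(\R^c)^2$, the average of the Siegel theta series
\begin{equation*}
\Theta_\varphi(\Lambda)=\sum_{0\ne v\in\Lambda}\varphi(v)
\end{equation*}
over the normalized Haar measure on the moduli space $\textup{O}(c,c)/\textup{O}(\Lambda_0)$, and then to read off $\rho_\ell(\Delta)$ by letting $\varphi$ approximate surface measure on a hyperboloid $\{v:Q(v)=2\ell\}$, where $Q(x,y)=|x|^2-|y|^2$. Since $g\mapsto\sum_{0\ne w\in\Lambda_0}\varphi(gw)$ is unchanged when $g$ is replaced by $g\gamma$ with $\gamma\in\textup{O}(\Lambda_0)$, this average is a well-defined integral over a fundamental domain for $\textup{O}(\Lambda_0)$ acting on $\textup{O}(c,c)$; the identity to be proved is then an instance of the Siegel--Weil formula.

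First I would split each nonzero $v\in\Lambda_0$ uniquely as $v=kv_0$ with $k\ge1$ an integer and $v_0$ primitive, and sort the primitive vectors according to the value $Q(v_0)=2m$. For $m\ne0$, the primitive vectors of norm $2m$ in $\Lambda_0$ form a single $\textup{O}(\Lambda_0)$-orbit (a classical fact for even unimodular indefinite lattices: write $\Lambda_0\cong U^{\oplus c}$ and move $v_0$ into one of the hyperbolic planes $U$); fix a representative $u_m$ and set $H_m=\mathrm{Stab}_{\textup{O}(c,c)}(u_m)$, which is isomorphic to $\textup{O}(c-1,c)$ or $\textup{O}(c,c-1)$. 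The standard unfolding identity then converts the integral over a fundamental domain of the orbit sum $\sum_{v_0\in\textup{O}(\Lambda_0)\cdot u_m}\varphi(kgv_0)$ into $\int_{(\textup{O}(\Lambda_0)\cap H_m)\backslash\textup{O}(c,c)}\varphi(kgu_m)\,dg$, and since $g\mapsto gu_m$ identifies $H_m\backslash\textup{O}(c,c)$ with the hyperboloid $X_{2m}=\{Q=2m\}$, this equals
\begin{equation*}
\vol\left((\textup{O}(\Lambda_0)\cap H_m)\backslash H_m\right)\cdot\int_{X_{2m}}\varphi(kv)\,d\sigma_{2m}(v),
\end{equation*}
where $\sigma_{2m}$ is the $\textup{O}(c,c)$-invariant (Gelfand--Leray) measure on $X_{2m}$. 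The null vectors ($m=0$, hence $\ell=0$) are handled the same way, using Witt's theorem on the null cone.

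The resulting double sum over $k$ and $m$ now factors into an archimedean piece and an arithmetic piece. For the archimedean piece, write a point of $X_{2\ell}$ in coordinates $(|x|,\hat x,|y|,\hat y)$ with $\hat x,\hat y$ on unit $(c-1)$-spheres, put $\Delta=\tfrac12(|x|^2+|y|^2)\in[|\ell|,\infty)$, and disintegrate Lebesgue measure through $Q$; one finds $d\sigma_{2\ell}=\tfrac14(\Delta^2-\ell^2)^{c/2-1}\,d\Delta\,d\hat x\,d\hat y$, so integrating out the two unit spheres contributes $\mathrm{area}(S^{c-1})^2=(2\pi^{c/2}/\Gamma(c/2))^2$ and supplies both the shape $(\Delta^2-\ell^2)^{c/2-1}$ and the prefactor $\pi^c/\Gamma(c/2)^2$. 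For the arithmetic piece, the ratio $\vol((\textup{O}(\Lambda_0)\cap H_m)\backslash H_m)/\vol(\textup{O}(\Lambda_0)\backslash\textup{O}(c,c))$ is, by Siegel's mass formula, an explicit Euler product of $p$-adic representation densities of $2m$ by the unimodular hyperbolic form; summing over all $k$ with $k^2\mid\ell$ (so that $m$ runs through $\ell/k^2$) collapses this to $2\sigma_{1-c}(\ell)/\zeta(c)$. Equivalently, one identifies the average $\langle\Theta_\varphi\rangle$ with an Eisenstein series on $\textup{O}(c,c)$ evaluated at $\varphi$, whose $\ell$-th Fourier coefficient is visibly proportional to $\sigma_{1-c}(\ell)/\zeta(c)$ times the archimedean integral above. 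Multiplying the two pieces gives the stated density.

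The main obstacle is the interplay between this arithmetic factor and convergence: one must justify the unfolding and the interchange of $\sum_k\sum_m$ with the integral, and show that the volume ratios --- equivalently the Eisenstein series, equivalently $\sigma_{2\ell}$ smeared over $\ell$ --- actually converge. This is exactly where the hypothesis $c>2$ is needed. For $c=2$ the relevant Eisenstein series has a pole, which is the origin of the extra $\log\Delta$ growth of the spin-$0$ density noted above, while for $c=1$ the moduli space $\textup{O}(1,1)/\textup{O}(\Lambda_0)$ has infinite volume and there is no probability measure at all. Granting convergence, the remaining work --- evaluating the $p$-adic densities for the hyperbolic form and checking that the $k$-sum reassembles the divisor sum $\sigma_{1-c}(\ell)$ with the correct powers of $2$, $\pi$, $\Gamma(c/2)$, and $\zeta(c)$ --- is lengthy but classical.
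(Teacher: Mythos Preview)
Your proposal is correct in outline and represents a genuinely different route from the one the paper sketches. The paper does \emph{not} unfold the theta series. Instead it argues in two steps: first, $\textup{O}(c,c)$-invariance forces the averaged density measure on each hyperboloid $\{Q=2\ell\}$ to be the (unique up to scale) invariant measure, which in $(\Delta,\hat x,\hat y)$ coordinates is $(\Delta^2-\ell^2)^{c/2-1}\,d\Delta$; second, the missing constant for each $\ell$ is fixed by an asymptotic count of lattice points with $x\cdot y=\ell$ in a large Euclidean ball, carried out via the Hardy--Littlewood circle method. The singular series factors as $\prod_p\sigma_p$ with $\sigma_p$ computed by an explicit recursion modulo prime powers, giving $\sigma_{1-c}(\ell)/\zeta(c)$; the singular integral supplies $\pi^c/\Gamma(c/2)^2$. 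A key point is that the asymptotic count is the same for \emph{every} Narain lattice (since $\sigma_\infty$ is $\textup{O}(c,c)$-invariant), so it pins down the constant in the average without ever computing a volume ratio. The paper also gives a variant of this argument phrased in terms of the partition function near cusps and a Tauberian theorem.

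Your approach---decompose into primitive orbits, invoke Eichler transitivity, unfold, and read off the arithmetic factor as a Siegel mass/Tamagawa volume ratio---is closer in spirit to Siegel's original proof and to the Siegel--Weil formula the paper alludes to but does not use in its derivation. What your route buys is a direct identification of the average with an Eisenstein series and a clean conceptual explanation for the Euler product; what the paper's route buys is that it avoids volume computations entirely and ties the constants to a concrete counting problem, which is pedagogically lighter and makes the link to the CFT density of states (the Cardy-type asymptotics in Section~\ref{ss:modinv}) transparent. Both arrive at the same local densities, yours via the mass formula and the paper's via counting solutions modulo $p^n$.
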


Here $\sigma_{1-c}(\ell)$ is the sum of $m^{1-c}$ for all positive integers
$m$ dividing $\ell$, and we define $\sigma_{1-c}(0) := \zeta(c-1)$ since all
positive integers divide $0$. Note that $\zeta(c-1)$ is infinite if $c=2$,
and this divergence comes from the excess spin $0$ states.

Theorem~\ref{theorem:siegel} is implicit in \cite{MR67930}, and it is made
explicit in Theorem~8 in Chapter~4 of Siegel's TIFR lecture notes
\cite{MR0271028} (with somewhat cumbersome notation). In the rest of this
section, we will explain how one can compute these densities, while omitting
technicalities. First, we lay the groundwork by analyzing averaging over
Euclidean lattices.

\subsection{Averaging over lattices}

Before he proved Theorem~\ref{theorem:siegel}, Siegel dealt with the easier
case of Euclidean lattices of determinant $1$ in $\R^d$. The space of such
lattices is the orbit of $\Z^d$ under the action of $\textup{SL}(d,\R)$, i.e., the
quotient space $\textup{SL}(d,\R)/\textup{SL}(d,\Z)$. This homogenous space has finite volume
under the Haar measure for $\textup{SL}(d,\R)$, and thus we have a canonical
probability measure on the space of lattices. Siegel \cite{SiegelMVT} found
that the density of nonzero points in such lattices is $1$ if $d>1$. In other
words, for every measurable subset $A$ of $\R^d$, the expected number of
nonzero points in $A$ for a random lattice of determinant $1$ is $\vol(A)$.
(This assertion is clearly false for $d=1$, because there is a unique lattice
of determinant $1$ in $\R$, namely $\Z$.)

Setting aside technicalities, it is not hard to arrive at this answer. Let
$\mu$ be the measure on $\R^d$ for which $\mu(A)$ is the expected number of
lattice points in $A$. Then $\mu$ must be invariant under the action of
$\textup{SL}(d,\R)$ on $\R^d$. Because $\textup{SL}(d,\R)$ acts transitively on
$\R^d\setminus\{0\}$ for $d>1$ and preserves Lebesgue measure, the measure
$\mu$ must be of the form $\alpha \delta_0 + \beta \lambda$, where $\delta_0$
is a delta function at the origin, $\lambda$ is Lebesgue measure on $\R^d$,
and $\alpha, \beta \ge 0$, since the invariant measure on each orbit is
unique among regular measures. We must have $\alpha=1$, since the origin
occurs once in every lattice, and the only remaining question is what $\beta$
is. Because every lattice of determinant $1$ has $1$ point per unit volume on
a large enough scale, we conclude that $\beta=1$, as desired.

To make this argument rigorous, one must check several things. The most
important omissions are that the quotient $\textup{SL}(d,\R)/\textup{SL}(d,\Z)$ has finite
volume, that $\mu$ is a locally finite measure and in fact regular, and that
we have enough uniformity to justify the interchange of limits needed to
obtain the averaged assertion $\beta=1$ from facts about individual lattices.
All of these obstacles can be overcome; see \cite{SiegelMVT} or, for example,
\cite{MoskowitszSacksteder} or \cite{Venkatesh} for a modern perspective. We
will omit such issues below, and simply refer to \cite{MR67930} and
\cite{MR0271028} for a rigorous proof of Theorem~\ref{theorem:siegel}.

\subsection{Geometry of Narain lattices}

Let $\mu$ be the density measure for points in Narain lattices. In other
words, for $A \subseteq \R^{c,c}$, the expected number of points in $A$ for a
random Narain lattice is $\mu(A)$. As in the previous case, every orbit of
$\textup{O}(c,c)$ has a unique invariant measure, up to scaling, and the only question
is which scaling occurs for each orbit.

By definition, $\textup{O}(c,c)$ preserves the inner product $(x,y) \cdot (x',y') =
x\cdot x' - y \cdot y'$, if we represent elements of $\R^{c,c}$ as pairs of
vectors in $\R^c$ with the usual inner product in $\R^c$, and it acts
transitively on each hyperboloid
\begin{equation}
\{ (x,y) \in \R^{c,c} : |x|^2 - |y|^2 = t\}
\end{equation}
with $t \in \R$ except for $t=0$, in which case $\{(0,0)\}$ and $\{(x,y) :
|x|^2-|y|^2 = 0\} \setminus\{(0,0)\}$ are separate orbits (see Section~2 of
\cite{Strichartz}).

The orbit $\{(0,0)\}$ contributes a delta function, since the origin occurs
once in each Narain lattice. The other orbits are parameterized by $t =
2\ell$ for spin $\ell \in \Z$, and it is not difficult to write down the
invariant measures on these orbits (see, for example, Section~2 in
\cite{Strichartz}). Because the space of Narain lattices is invariant under
$\textup{O}(c) \times \textup{O}(c)$, all the information in these measures is contained in the
distribution of spins and scaling dimensions, i.e., a measure on $\Z \times
[0,\infty)$. We can compute this measure as follows. The homogeneous metric
on the hyperboloid $|x|^2-|y|^2=2\ell$ is proportional to
\begin{equation}\label{hypermetric}
-d\alpha^2  - \sinh^2 \alpha \, d\tilde{\Omega}_{c-1}^2  + \cosh^2 \alpha \, d\Omega_{c-1}^2,
\end{equation}
with $|x| = \sqrt{2\ell}\cosh\alpha$, $|y| = \sqrt{2\ell}\sinh\alpha$, and
$d\tilde{\Omega}_{c-1}^2$, $d\Omega_{c-1}^2$ each a line element on a unit
$(c-1)$-sphere. This formula is derived by parameterizing the hyperboloid as
$x= z \sqrt{2\ell} \cosh \alpha$, $y = \tilde{z} \sqrt{2\ell}\sinh\alpha$,
with $z$ and $\tilde{z}$ each unit vectors in $\mathbb{R}^{c}$, and plugging
into the line element $|dx|^2 - |dy|^2$. The corresponding volume element on the
hyperboloid is proportional to $(|x| |y|)^{c-1}d\alpha$. Because $|x|^2=\Delta+\ell$ and $|y|^2 = \Delta-\ell$, the density
of scaling dimensions $\Delta$ for spin $\ell$ is proportional to
$(\Delta^2-\ell^2)^{c/2-1}$ for $\Delta \ge |\ell|$, and of course it
vanishes otherwise, since no state can have $\Delta < |\ell|$.

Thus, the subtle content of Theorem~\ref{theorem:siegel} is the constants
\begin{equation}
\frac{2 \pi^c \sigma_{1-c}(\ell)}{\Gamma(c/2)^2 \zeta(c)}
\end{equation}
used to scale these measures, while the general form follows from the
$\textup{O}(c,c)$ symmetry. In the Euclidean case, there was only one missing
constant, which was easily determined, but here we must obtain infinitely
many constants. Fortunately, the same sort of argument works: every Narain
CFT with $c>2$ has the same asymptotic number of primary states of fixed spin
$\ell$ and scaling dimension at most $\Delta$, namely
\begin{equation} \label{Narain-asymp}
\left(\frac{2 \pi^c \sigma_{1-c}(\ell)}{\Gamma(c/2)^2 \zeta(c)} +o(1)\right) \frac{\Delta^{c-1}}{c-1}
\end{equation}
such states as $\Delta \to \infty$, which agrees with the Siegel density. All
that remains is to explain this formula.

\subsection{Counting states} \label{ss:counting}

To obtain the missing constants, we need to count states in a Narain CFT.
A closely related counting problem was treated in \cite{Moore:2015bba}.\footnote{One
of the main results of \cite{Moore:2015bba} is the volume of moduli space for symmetric product
CFTs with $N$ copies of a seed CFT. The calculation and final result are essentially the same
as in this subsection and appendices, with the replacement $\ell \to N$. The result of
\cite{Moore:2015bba} was interpreted as evidence that CFTs with a weakly coupled
holographic dual are rare. Our ensemble and our interpretation are different, but
not in disagreement with this conclusion since our bulk theory is not standard 3d gravity.}
To simplify the analysis, we choose null coordinates so that our quadratic form
of signature $(c,c)$ is given by $Q(x,y) = 2(x \cdot y)$ for $(x,y) \in
(\R^c)^2$. Then $(\Z^{c})^2$ is a Narain lattice (see
Appendix~\ref{appendix:narain}), and we will focus on this specific lattice
before generalizing to all Narain lattices.

The question is how many vectors in $(x,y) \in (\Z^{c})^2$ have $x \cdot y =
\ell$ and $|x|^2+|y|^2 \le r^2$ as $r \to \infty$. The Hardy-Littlewood
circle method gives an answer when $c>2$: the number of such vectors is
asymptotic to
\begin{equation}
\sigma_\infty(B_r) \prod_{\text{$p$ prime}} \sigma_p,
\end{equation}
where $B_r$ is the ball $\{(x,y) \in (\R^c)^2 : |x|^2+|y|^2 \le r^2\}$ of
radius $r$, $\sigma_\infty$ is the \emph{singular integral} defined by
\begin{equation}
\sigma_\infty(A) = \int_\R dt \int_{(x,y) \in A} dx\, dy\, e^{2\pi i ((x \cdot y)-\ell) t}
\end{equation}
for $A \subseteq (\R^c)^2$, and $\sigma_p$ is defined by
\begin{equation}
\sigma_p = \lim_{n \to \infty} \frac{\#\{(x,y) \in \big((\Z/p^n\Z)^{c}\big)^2 : x\cdot y \equiv  \ell \pmod{p^n}\}}{p^{(2c-1)n}}.
\end{equation}
The product $\prod_p \sigma_p$ is called the \emph{singular series}. The
intuition here is that we are counting integral solutions to the equation
$x\cdot y = \ell$, and each factor measures a different constraint:
$\sigma_\infty$ measures how many real solutions there are, and $\sigma_p$
measures how many solutions there are modulo high powers of $p$. There is no
reason to expect such an elegant answer in general, but it works here (see,
for example, \cite{MR1474964} or \cite{Heath-Brown} for the circle method, or
\cite{MR974910,MR1230289,MR1131433,MR1230290} for other approaches to these
sorts of counting problems). We will give a high-level description of the
method here, with some additional details in Appendix~\ref{app:circle}.

We begin by writing the lattice point count as a Fourier integral, namely
\begin{equation}
\#\{(x,y) \in B_r \cap (\Z^c)^2 : x \cdot y = \ell\} =
\int_0^1 dw\, \sum_{(x,y) \in B_r \cap (\Z^c)^2} e^{2\pi i (x \cdot y - \ell) w}.
\end{equation}
We would like to approximate this integral for large $r$, which requires
understanding where the integrand is large.

\begin{figure}
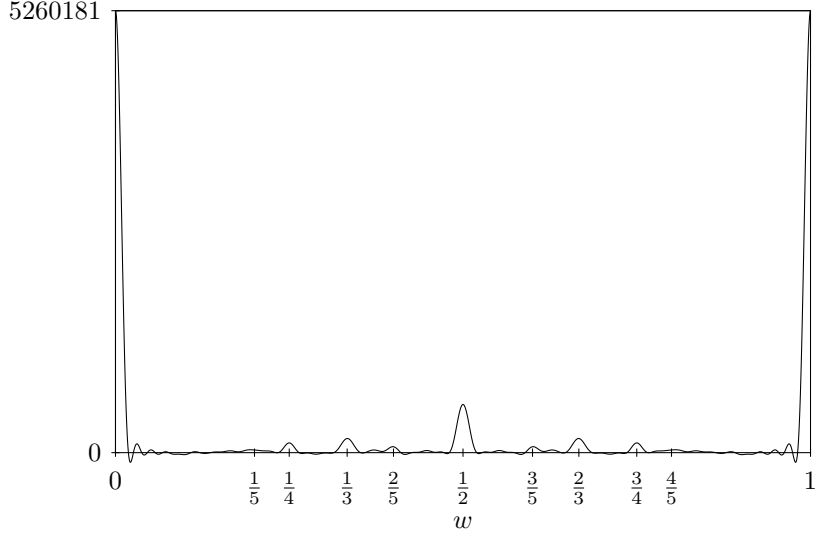

\begin{center}

\end{center}
\caption{The circle method integrand with $c=3$, $r=10$, and $\ell=0$.}
\label{figure:circle}
\end{figure}

The integrand is largest when $w=0$, in which case it simply counts the
lattice points in $B_r$ without regard for whether $x \cdot y = \ell$. It
turns out that the dominant contributions to the integral come from
intervals around rational numbers with small denominators, as illustrated in
Figure~\ref{figure:circle}. We will omit the estimates needed to prove this
assertion, as well as to bound the error terms throughout the argument;
instead, we will outline the calculations without fully justifying them.
Asymptotically, the dominant contributions come from $w$ in the \emph{major
arcs}
\begin{equation}
\left\{ w \in [0,1] : \left|w-\frac{a}{b}\right| \le \frac{1}{r^{2-\varepsilon}}\right\}
\end{equation}
for rational numbers $a/b$ in lowest terms with $1 \le b \le r^\varepsilon$, where $0<\varepsilon \ll 1$
(strictly speaking, we should wrap around and consider $w$ modulo $1$ to deal
with the endpoints), and the remaining \emph{minor arcs} turn out to
contribute a negligible amount.\footnote{The ``arc'' terminology comes from
integrating around the unit circle.} Note that the major arcs do not overlap,
and thus we can treat them independently. The remaining calculations amount
to approximating the integral over each major arc by an exponential sum times
the singular integral, and then factoring the sum of the resulting terms to
obtain the singular series. See Appendix~\ref{app:circle} for more details.

It is not hard to derive a recurrence for $\#\{(x,y) \in
\big((\Z/p^n\Z)^{c}\big)^2 : x\cdot y \equiv  \ell \pmod{p^n}\}$ (see
Appendix~\ref{app:counting}), and we find that
\begin{equation}
\sigma_p = \frac{(1-p^{-c})(1-p^{-(c-1)(k+1)})}{1-p^{-(c-1)}}
\end{equation}
if $p^k$ is the largest power of $p$ dividing $\ell$, where if $\ell=0$, we
take $k=\infty$ and therefore $p^{-(c-1)(k+1)}=0$.  A little manipulation
then shows that
\begin{equation}
\prod_{\text{$p$ prime}} \sigma_p = \frac{\sigma_{1-c}(\ell)}{\zeta(c)}.
\end{equation}
Furthermore, rescaling $(x,y)$ and $t$ shows that
\begin{equation}
\sigma_\infty(B_r) \sim r^{2c-2} \int_\R dt \int_{(x,y) \in B_1} dx\, dy\, e^{2\pi i (x \cdot y) t},
\end{equation}
and one can compute that
\begin{equation}
\int_\R dt \int_{(x,y) \in B_1} dx\, dy\, e^{2\pi i (x \cdot y) t} = \frac{\pi^c}{(c-1) 2^{c-2} \Gamma(c/2)^2}.
\end{equation}
Setting $\Delta = r^2/2$ yields the desired asymptotics for the Narain
lattice $(\Z^{c})^2$.

All that remains is to generalize this calculation to other Narain lattices.
Using the Iwasawa decomposition for $\textup{O}(c,c)$, we can reduce to the case of
lattices
\begin{equation}
\{ (A x + M (A^t)^{-1} y, (A^t)^{-1}y) : (x,y) \in (\Z^{c})^2\},
\end{equation}
where $A \in GL(c,\R)$ and $M \in \R^{c \times c}$ is antisymmetric (see
Proposition~\ref{prop:narainequiv} and the discussion following it). We have
\begin{equation}
(A x + M (A^t)^{-1} y) \cdot ((A^t)^{-1}y) = x \cdot y,
\end{equation}
and so the general problem amounts to counting solutions of $x \cdot y =
\ell$ with
\begin{equation}
(x,y) \in B_r' := \{(x,y) \in (\R^c)^2 : |A x + M (A^t)^{-1} y|^2+|(A^t)^{-1}y|^2 \le r^2\}.
\end{equation}
The only difference in this calculation is in the value
$\sigma_\infty(B_r')$, but $\sigma_\infty$ is an $\textup{O}(c,c)$-invariant measure.
Because $B_r'$ is the image of $B_r$ under an element of $\textup{O}(c,c)$, we obtain
the same constant for any Narain lattice, which completes the informal
derivation of Siegel's theorem.

\subsection{Modular invariance} \label{ss:modinv}

It is instructive to rephrase this derivation in terms of the partition
function. Doing so amounts to a weighted version of the circle method, and it
highlights the role of modular invariance in dealing with the major arcs. In
this calculation we take $\btau = \tau^*$.

Define the reduced partition function by
\begin{equation}
\widehat{Z}(\tau, \tau^*) = (\Im\tau)^{c/2} |\eta(\tau)|^{2c} Z(\tau,\tau^*) ,
\end{equation}
where the $|\eta(\tau)|^{2c}$ factor removes the denominator from the
characters while the $(\Im\tau)^{c/2}$ factor restores modular invariance.
Its leading behavior as  $\Im \tau \to \infty$, the vacuum contribution, is
\begin{equation} \label{eq:Zhatinf}
\widehat{Z}(\tau, \tau^*) \sim (\Im \tau)^{c/2} .
\end{equation}
Our goal is to show that this vacuum term is responsible for the asymptotics
\eqref{Narain-asymp} using modular invariance. We will again break up an
integral into contributions from major arcs, and dealing with them will
require asymptotics for $\widehat{Z}(\tau,\tau^*)$ near rational numbers
$a/b$, or equivalently cusps of $\textup{SL}(2,\Z)$. Specifically, suppose
$\gcd(a,b)=1$, and we wish to approximate $\widehat{Z}(\tau,\tau^*)$ for
$\tau$ near $a/b$, i.e., $\tau = a/b+x+yi$ with $x$ and $y$ small. By
choosing integers $f$ and $g$ with $af+bg=-1$, we obtain a matrix
\begin{equation}
\begin{pmatrix} f & g \\ b & -a \end{pmatrix}
\end{equation}
in $\textup{SL}(2,\Z)$, which maps $a/b$ to $i \infty$. It maps nearby points
$a/b+x+yi$ to
\begin{equation}
\frac{f}{b} - \frac{x}{b^2(x^2+y^2)} + \frac{y}{b^2(x^2+y^2)} i,
\end{equation}
whose imaginary part tends to infinity as we approach $a/b$. When $x$ and $y$
are both small, we conclude from modular invariance and \eqref{eq:Zhatinf}
that
\begin{equation} \label{eq:Zhatcusps}
\widehat{Z}(a/b + x + yi, a/b + x -yi) \sim \left(\frac{y}{b^2(x^2+y^2)}\right)^{c/2}.
\end{equation}

We will use this approximation in a manner similar to Cardy's calculation of
the total density of states in a CFT \cite{Cardy:1986ie}, but refined to
project onto an individual spin.\footnote{Related ideas have been discussed
recently in the Virasoro context \cite{Alday:2019vdr, Pal:2019zzr}.
See also \cite{Manschot:2007ha,Cheng:2011ay,Cheng:2012qc,Ferrari:2017msn} for
supersymmetric versions (where the partition function is holomorphic) and
\cite{Mukhametzhanov:2018zja} for related applications to conformal
correlators.} By an inverse Fourier transform, the density of primaries
$\rho_{\ell}(\Delta)$ obeys
\begin{equation}
Z_\ell(y) := y^{c/2} \int d\Delta\, e^{-2\pi y \Delta} \rho_{\ell}(\Delta) = \int_0^{1} dx\, e^{-2\pi i \ell x} \widehat{Z}(x+yi,x-yi).
\end{equation}
The asymptotic density of primaries is encoded in the behavior of $Z_\ell(y)$
as $y \to 0$. The dominant contribution to the integral in this regime comes
from the major arcs and can be described as follows (see, for example,
\cite{Heath-Brown}). Let $B$ be a bound depending on $y$, with $B \to
\infty$, $B^3y \to 0$, and $B^4y \to \infty$ as we take $y \to 0$. Using the
major arcs, we approximate $Z_\ell(y)$ by
\begin{equation}\label{hblemma}
Z_\ell(y) \sim \sum_{1 \le b \leq B}  \sum_{ \substack{1 \le a \le b\\ \gcd(a,b)=1}} \int_{-1/(bB^2)}^{1/(bB^2)} dx \,
e^{-2\pi i \ell(a/b + x)}\widehat{Z}\left( \frac{a}{b} + x + yi,  \frac{a}{b} + x - yi\right) .
\end{equation}

Within the range of integration in \eqref{hblemma}, our assumptions on $B$
imply that we can use \eqref{eq:Zhatcusps} to estimate $\widehat{Z}$. The
phase $e^{-2\pi i \ell x}$ under the integrand is approximately constant,
and so
\begin{equation}
\begin{split}
Z_\ell(y) &\sim \sum_{1 \le b \leq B}  \sum_{ \substack{1 \le a \le b\\ \gcd(a,b)=1}}  e^{-2\pi i \ell a/b} b^{-c} y^{c/2} \int_{-1/(bB^2)}^{1/(bB^2)} \frac{dx}{(x^2+y^2)^{c/2}}\\
&=  \sum_{1 \le b \leq B}  \sum_{ \substack{1 \le a \le b\\ \gcd(a,b)=1}}  e^{-2\pi i \ell a/b} b^{-c} y^{1-c/2} \int_{-1/(bB^2y)}^{1/(bB^2y)} \frac{du}{(1+u^2)^{c/2}}.
\end{split}
\end{equation}
Now our assumption that $B^3 y \to 0$ implies that the integral converges to an
integral over the entire line, which we can evaluate using the beta function
as
\begin{equation}
\int_{-\infty}^{\infty} \frac{du}{(1+u^2)^{c/2}} = \frac{\pi^{1/2} \Gamma( \frac{c-1}{2} )}{\Gamma(c/2)}.
\end{equation}
Thus, we have found that
\begin{equation}
Z_\ell(y) \sim y^{1-c/2} \frac{\pi^{1/2} \Gamma( \frac{c-1}{2} )}{\Gamma(c/2)} \sum_{b=1}^\infty  \sum_{ \substack{1 \le a \le b\\ \gcd(a,b)=1}}  e^{-2\pi i \ell a/b} b^{-c}.
\end{equation}
Ramanujan \cite[\S9.6]{MR0106147} showed that
\begin{equation}
\sum_{b=1}^\infty  \sum_{ \substack{1 \le a \le b\\ \gcd(a,b)=1}}  e^{-2\pi i \ell a/b} b^{-c} = \frac{\sigma_{1-c}(\ell)}{\zeta(c)} ,
\end{equation}
from which we conclude that
\begin{equation}
Z_{\ell}(y)
\sim \frac{\pi^{1/2} \sigma_{1-c}(\ell) \Gamma( \frac{c-1}{2} ) }{ \Gamma(c/2)\zeta(c)} y^{1-c/2}
\end{equation}
as $y \to 0$.

In terms of the density of states,
\begin{equation}
\int d\Delta \, \rho_\ell(\Delta) e^{-2\pi \Delta y} \sim
 \frac{\pi^{1/2} \sigma_{1-c}(\ell) \Gamma( \frac{c-1}{2} ) }{ \Gamma(c/2)\zeta(c)} y^{1-c}
\end{equation}
as $y \to 0$, and the inverse Laplace transform of the right side is
\begin{equation}
\frac{2\pi^c \sigma_{1-c}(\ell) \Delta^{c-2} }{\Gamma(c/2)^2 \zeta(c)} .
\end{equation}
From Karamata's Tauberian theorem \cite[Theorem~4.3 of Chapter~V]{Widder} we
conclude that this quantity is the density of states in an averaged sense as
$\Delta \to \infty$. That is,
\begin{equation}
\int_{|\ell|}^\Delta d\tilde{\Delta}\, \rho_{\ell}(\tilde{\Delta})  \sim \frac{2\pi^c \sigma_{1-c}(\ell) \Delta^{c-1} }{(c-1)\Gamma(c/2)^2 \zeta(c)}
\end{equation}
as $\Delta \to \infty$, which gives precisely the constant in Siegel's theorem.

\subsection{Spectral gap} \label{ss:spectralgap}

Theorem~\ref{theorem:siegel} proves the existence of Narain CFTs with
spectral gap
\begin{equation}
\Delta_1 = (1+o(1)) \frac{c}{2\pi e}
\end{equation}
as $c \to \infty$. The reasoning is simple: the expected number of non-vacuum
primary states with $\Delta \le \alpha c$ is
\begin{equation}
\frac{2\pi^c\sigma_{1-c}(\ell)}{\Gamma(c/2)^2\zeta(c)} \sum_{|\ell| \le  \alpha c} \int_{|\ell|}^{ \alpha c} d\Delta \, (\Delta^2 - \ell^2)^{c/2-1},
\end{equation}
which is at most a constant times
\begin{equation}
\frac{\pi^c}{\Gamma(c/2)^2} \sum_{|\ell| \le  \alpha c} \frac{( \alpha c)^{c-1}}{c-1},
\end{equation}
and thus at most a constant times
\begin{equation}
\frac{(\pi  \alpha c)^c}{\Gamma(c/2)^2}.
\end{equation}
Stirling's formula shows that this bound is
\begin{equation}
(2 \pi e \alpha + o(1))^c
\end{equation}
as $c \to \infty$. If $\alpha < 1/(2\pi e)$, then the expected number of
states tends to $0$ as $c \to \infty$. Because the number of primaries with
$\Delta \le \alpha c$ is always an integer, it must vanish for some Narain
CFTs, in fact almost all of them. Letting $\alpha \to 1/(2\pi e)$ as $c \to
\infty$ shows that we can obtain $\Delta_1 = (1+o(1)) c/(2\pi e)$, as
desired.

This sort of averaging argument cannot prove any better bound for the
spectral gap: if $\alpha > 1/(2\pi e)$, then the expected number of states
grows exponentially, and we cannot rule out the possibility that every Narain
CFT has at least one non-vacuum primary in this range. In sphere packing
terms, $\alpha = 1/(2\pi e)$ corresponds to the Minkowski-Hlawka lower bound
for the sphere packing density (namely, a lower bound of $2^{-d}$ in $\R^d$),
which is the best lower bound known up to subexponential factors. Because all
Narain lattices yield sphere packings, any improvement on $1/(2\pi e)$ would
yield exponentially denser sphere packings and thus solve a longstanding open
problem in discrete geometry.

\section{Holographic duality}\label{s:holography}

In this section we set $\btau = \tau^*$, so that the CFT partition function
is equal to the Euclidean path integral on a torus with modulus $\tau$.

\subsection{Warm-up: the $\textup{U}(1)^c$ Cardy formula}

The conclusion that averaged Narain lattices have $\Delta_1 \sim c/(2\pi e)$
suggests a holographic interpretation. First we will  aim to provide some
intuition for this connection, while postponing the more careful analysis to
the next subsection.

Before turning to the $\textup{U}(1)^c$ case, consider a CFT with only Virasoro
symmetry. We specialize to zero angular potential, i.e.,  $\tau = - \btau = i
\beta$ with $\beta$ the inverse temperature. At high temperature, or
equivalently $\beta \to 0$, the partition function can be approximated by
doing an $S$ transformation and keeping only the vacuum state in the dual
channel, which yields
\begin{equation}
Z(\beta) = Z(-1/\beta) \approx e^{\pi c / (6\beta)} .
\end{equation}
Re-expressed in the original channel, this approximation corresponds to the
Cardy \cite{Cardy:1986ie} density of states
\begin{equation}\label{rhocardy}
\rho_\textup{Cardy} (\Delta)  \approx \exp\left( 2\pi \sqrt{\frac{c}{3}\left(\Delta - \frac{c}{12}\right)}\right) ,
\end{equation}
where we have kept only the exponential dependence. In a general CFT, this
formula controls the average asymptotic density of states as $\Delta \to
\infty$, and applies only for $\Delta \gg c$. However, in a holographic CFT
dual to pure gravity in three dimensions, the Cardy regime is extended. In
these theories, \eqref{rhocardy} applies for $\Delta \gtrsim \frac{c}{12}$,
and this formula should be viewed as a large-$c$ limit rather than a
large-$\Delta$ limit \cite{Hartman:2014oaa}.\footnote{See
\cite{Benjamin:2015hsa} for a related discussion of elliptic genera in
supersymmetric theories.} In the gravitational theory, \eqref{rhocardy} is
interpreted as the density of states of the BTZ black hole
\cite{Strominger:1997eq}.

In a theory of pure 3d gravity, we may expect the first nontrivial primary
state to be a black hole microstate, so that $\Delta_1 \sim \frac{c}{12}$
(although it could be lower; see \cite{Benjamin:2019stq,Alday:2019vdr,Benjamin:2020mfz}). In fact, quite
generally the physics of pure gravity in three dimensions is captured by the
contribution of the vacuum conformal block in different channels
\cite{Yin:2007gv,Hartman:2013mia,Hartman:2014oaa}. The conclusion is that in
the CFT dual, we can estimate the spectral gap to be the value of $\Delta$ at
which the Cardy density of states becomes large.

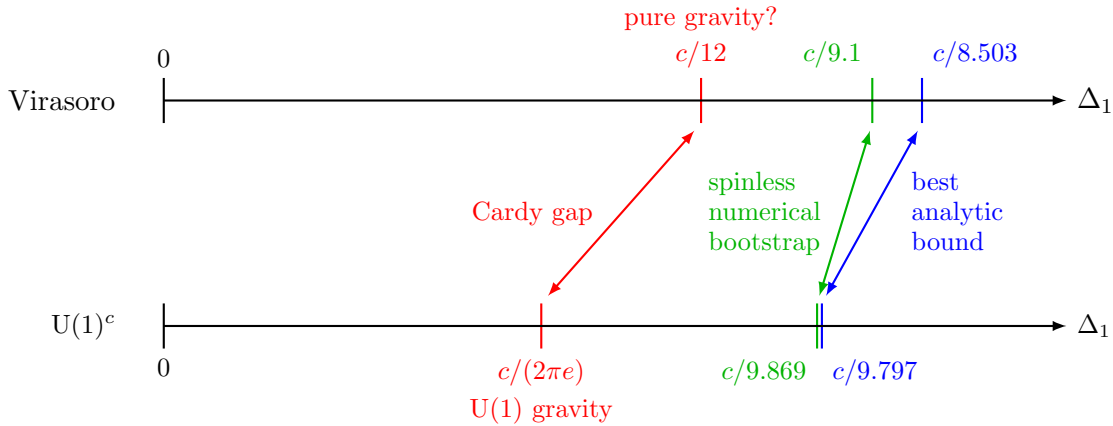
\begin{figure}
\begin{center}
\begin{tikzpicture}[thick]
\draw (12,3) node[right] {$\Delta_1$};
\draw (-0.5,3) node[left] {Virasoro};
\draw (0,2.7)--(0,3.3); \draw (0,3.3) node[above] {\small $0$};
\draw[red] (7.1428571428,2.7)--(7.1428571428,3.3); \draw[red] (7.1428571428,3.3) node[above] {\small $c/12$};
\draw[red] (7.1428571428,3.8) node[above] {\small pure gravity?};
\draw[green!70!black] (9.419152276295,2.7)--(9.419152276295,3.3); \draw[green!70!black] (9.419152276295,3.3) node[above left] {\small $c/9.1$};
\draw[blue] (10.080066,2.7)--(10.080066,3.3); \draw[blue] (10.080066,3.3) node[above right] {\small $c/8.503$};
\draw[red,<->,>=latex] (5.1070695,0.40000000)--(7.0543446,2.6000000);
\draw[red] (4.9,1.5) node {\small Cardy gap};
\draw[green!70!black,<->,>=latex] (8.7157852,0.40000000)--(9.3885711,2.6000000);
\draw[green!70!black] (7.1,1.9) node[right] {\small spinless};
\draw[green!70!black] (7.1,1.5) node[right] {\small numerical\vphantom{p}}; 
\draw[green!70!black] (7.1,1.1) node[right] {\small bootstrap};
\draw[blue,<->,>=latex] (8.8047106,0.40000000)--(10.024616,2.6000000);
\draw[blue] (9.8,1.9) node[right] {\small best\vphantom{y}}; 
\draw[blue] (9.8,1.5) node[right] {\small analytic\vphantom{b}}; 
\draw[blue] (9.8,1.1) node[right] {\small bound\vphantom{y}}; 
\draw (12,0) node[right] {\small $\Delta_1$};
\draw (-0.5,0) node[left] {\small $\textup{U}(1)^c$};
\draw (0,-0.3)--(0,0.3); \draw (0,-0.3) node[below] {\small $0$};
\draw[red] (5.018556987,-0.3)--(5.018556987,0.3); \draw[red] (5.018556987,-0.3) node[below] {\small $c/(2 \pi e)$};
\draw[red] (5.018556987,-0.8) node[below] {\small $\textup{U}(1)$ gravity};
\draw[green!70!black] (8.685204,-0.3)--(8.685204,0.3); \draw[green!70!black] (8.685204,-0.3) node[below left] {\small $c/9.869$};
\draw[blue] (8.74926038,-0.3)--(8.74926038,0.3); \draw[blue] (8.74926038,-0.3) node[below right] {\small $c/9.797$};
\draw[->,>=latex] (0,3)--(12,3);
\draw[->,>=latex] (0,0)--(12,0);
\end{tikzpicture}
\end{center}
\caption{Spectral gap for the Virasoro algebra and $\textup{U}(1)^c$ algebra at large central charge $c$. The green and blue marks show upper bounds on $\Delta_1$ from linear programming, i.e., the modular bootstrap. The numerical upper bounds were estimated for Virasoro in \cite{Afkhami-Jeddi:2019zci}  and for $\textup{U}(1)^c$ in \cite{paper1}. The analytic result for $\textup{U}(1)^c$ is the Kabatyanskii-Levenshtein bound \cite{KL,MR3229046}, and the analytic bound for Virasoro was derived in \cite{HMR}.}
\label{figure:best-bounds}
\end{figure}

Now let us repeat this analysis for a theory with the chiral algebra
$\textup{U}(1)^c$. The situation is summarized in Figure~\ref{figure:best-bounds},
along with numerical and analytic upper bounds on the spectral gap. The
analogue of the Cardy formula \cite{paper1} for $\textup{U}(1)^c$ is
\begin{equation}
\rho_\textup{Cardy}(\Delta) \sim \frac{(2\pi)^c \Delta^{c-1}}{\Gamma(c)},
\end{equation}
which has support down to $\Delta = 0$. However, that does not mean the
spectral gap is zero, because for small $\Delta$ there is on average less
than one state. To estimate the spectral gap we set
$\rho_\textup{Cardy}(\Delta_1) \approx 1$ and take the $c \to \infty$ limit.
The result is
\begin{equation}
\Delta_1 \sim \frac{c}{2\pi e} .
\end{equation}
This calculation agrees with the spectral gap of an average Narain lattice
from Section~\ref{ss:spectralgap}. In other words, an average Narain lattice
saturates the Cardy estimate for $\Delta_1$.

This coincidence suggests looking for a holographic dual. This argument is
certainly not conclusive, though. In particular, we would not expect the
holographic dual to have black holes that dominate the canonical ensemble at
finite temperature. In other words, other modular images of the vacuum under
$\textup{SL}(2, \mathbb{Z})$ can be equally important. To check whether the
holographic interpretation survives a more careful analysis we will now
examine these other contributions.

\subsection{Bulk partition function} \label{ss:bpf}
Let us calculate the partition function of the three-dimensional theory of
$\textup{U}(1)$ gravity described in the introduction. As we have stressed, we do not
have a full non-perturbative definition of this theory. In the introduction
we have only specified its perturbative excitations on a torus, and now we
will give a prescription to calculate the genus-one partition function by a
sum over topologies.

The first step is to calculate the perturbative contribution in thermal
AdS$_3$. The theory is topological, so the metric makes no difference, but we
will nevertheless refer to these manifolds in the language of AdS/CFT to make
the analogy clear. Thermal AdS$_3$ is a hyperbolic 3-manifold with a torus
conformal boundary and the topology of a solid torus. To describe it, let $z$
be a coordinate on the boundary torus, with the identifications
\begin{equation}
z \sim z + \tau \sim z + 1 .
\end{equation}
Thermal AdS$_3$ is by definition the hyperbolic manifold filling in this
torus with the cycle $z \sim z + 1$ contractible in the bulk.

The 1-loop partition function of $\textup{U}(1)$ gravity in thermal AdS$_3$ is by
design equal to the $\textup{U}(1)^c \times \textup{U}(1)^c$ vacuum character,
\begin{equation}\label{ztads}
Z_\textup{tAdS}(\tau, \btau) =\frac{1}{\eta(\tau)^c \eta(-\btau)^c}  = \chi_0(\tau) \bar{\chi}_0(\btau) .
\end{equation}
This formula is derived in \cite{Porrati:2019knx} from the 1-loop determinant
of the Chern-Simons fields (including the contributions from gauge fixing).
It is also easy to understand from a Hamiltonian point of view, because the
bulk theory has asymptotic symmetries corresponding to the $\textup{U}(1)^c \times
\textup{U}(1)^c$ affine algebra. The theory is quadratic, so the result is exact in
perturbation theory.\footnote{ The perturbative calculation is insensitive to
the global structure of the gauge group, so we can take it to be non-compact.
In other words, we are not performing an additional sum over nontrivial gauge
configurations.}

The full partition function is a sum over topologies with the boundary
condition $\tau$ at infinity:
\begin{equation}
Z(\tau, \btau) = \sum_\textup{topologies}Z_{\mathcal{M}}(\tau,\btau) .
\end{equation}
We will sum over the solid tori obtained by filling in different cycles of
the boundary torus, as in \cite{Maloney:2007ud}.  In gravity language, we sum
over the Euclidean BTZ black holes. It is not obvious why this is the right
thing to do, and it is a provisional choice motivated by the analogy to 3d
gravity.

The different ways of filling in the boundary torus are related by the action
of $\textup{SL}(2,\mathbb{Z})$, so roughly speaking we must sum \eqref{ztads} over
$\textup{SL}(2,\mathbb{Z})$ images. However, $Z_\textup{tAdS}$ is invariant under
$\tau \to \tau  + 1$, so these contributions are not distinct. The distinct
contributions are labeled by elements of $\textup{SL}(2, \mathbb{Z})/\Gamma_\infty$,
where $\Gamma_\infty$ is generated by $T$. Thus
\begin{equation}
Z(\tau, \btau) = \sum_{\gamma \in \textup{SL}(2,\mathbb{Z})/\Gamma_\infty} \frac{1}{|\eta(\gamma\tau)|^{2c}} = (\Im \tau)^{-c/2} |\eta(\tau)|^{-2c} \sum_{\gamma \in \textup{SL}(2,\mathbb{Z})/\Gamma_\infty} (\Im \gamma \tau)^{c/2}  ,
\end{equation}
where in the second equation we used the fact that the combination $(\Im
\tau)^{1/2} |\eta(\tau)|^2$ is modular invariant.

This sum is proportional to a non-holomorphic Eisenstein series. That is,
\begin{equation}
Z(\tau, \btau) = (\Im \tau)^{-c/2} |\eta(\tau)|^{-2c} E\mathopen{}\left(\tau, \frac{c}{2}\right)\mathclose{},
\end{equation}
where the Eisenstein series is defined by
\begin{equation}
E(\tau, s) = \sum_{\gamma \in \textup{SL}(2,\mathbb{Z})/\Gamma_\infty} (\Im \gamma \tau)^s  .
\end{equation}
For $c>2$, the sum converges. Siegel proved that in this case $Z(\tau,\btau)$
agrees with the CFT partition function averaged over moduli
\cite{MR67930,MR67931,MR0271028}.\footnote{ See Theorem~12 in Chapter~4 of
\cite{MR0271028}.} To reproduce his result, we will extract the spectrum
$\rho_\ell(\Delta)$ from $Z(\tau,\btau)$ by comparing to the general form
\eqref{zld}. We first do a Fourier transform to organize by spin $\ell$, then
an inverse Laplace transform to find $\rho_\ell(\Delta)$. The Fourier
expansion of the non-holomorphic Eisenstein series is (see, for example,
\cite[Section~5.2]{MR3675870})
\begin{equation}\label{efourier} \begin{split}
E(\tau, s) &= y^s + \frac{\pi^{1/2} \Gamma(s - \frac{1}{2})\zeta(2s-1)}{\Gamma(s)\zeta(2s)}y^{1-s}\\
&\quad   \phantom{}+\frac{4\pi^s}{\Gamma(s) \zeta(2s)}\sum_{\ell=1}^{\infty} \ell^{s-1/2}\sigma_{1-2s}(\ell) y^{1/2} K_{s-1/2}(2\pi \ell y)\cos(2\pi \ell x)
\end{split}
\end{equation}
with $\tau = x + yi$, $K_\nu$ the modified Bessel function, and
\begin{equation}
\sigma_t(\ell) = \sum_{n|\ell}n^t
\end{equation}
the divisor function. Taking the inverse Laplace transform of the first two
terms in \eqref{efourier} and comparing to \eqref{zld} gives the scalar
density of states
\begin{equation}\label{scalarrho}
\rho_0(\Delta) = \delta(\Delta)  +  \frac{2\pi^{c}\zeta(c-1)}{\Gamma(\frac{c}{2})^2\zeta(c)}\Delta^{c-2} .
\end{equation}
The delta function at zero is the vacuum state.\footnote{We use the convention
$\int_0^\infty d\Delta \,\delta(\Delta) = 1$.} After an inverse Laplace
transform, the spinning terms in \eqref{efourier} lead to
\begin{equation}\label{spinningrho}
\rho_{\ell}(\Delta) = \frac{2\pi^c \sigma_{1-c}(\ell)}{\Gamma(c/2)^2\zeta(c)}
(\Delta^2 - \ell^2)^{c/2-1} .
\end{equation}
The results \eqref{scalarrho}--\eqref{spinningrho} agree exactly with the
density of states of an averaged Narain lattice from
Theorem~\ref{theorem:siegel}.

\subsection{Origin of the agreement}

We have reproduced Siegel's result relating the Eisenstein series to an
integrated partition function by explicitly calculating both sides and
comparing term by term. A more conceptual explanation is as follows. In the
derivation of the Siegel measure on random Narain lattices
in Section~\ref{ss:modinv}, we argued that
there is unique modular invariant partition function that is homogenous on
each hyperboloid $|u|^2-|v|^2=2\ell$ with $(u,v) \in (\R^c)^2$. That is, any
modular-invariant spectrum with $\rho_\ell(\Delta) \propto
(\Delta^2-\ell^2)^{c/2-1}$ and a unique vacuum state will necessarily agree
with a random Narain lattice. The circle method calculation to determine the prefactors for
each spin orbit depends only on the asymptotics of the partition function,
and these asymptotics are fixed by modular invariance.

The Eisenstein series is modular invariant by construction for $c>2$, so we only
need to check that $\rho_\ell(\Delta)$ has the correct dependence on
$\Delta$. To this end, we will use the fact that the Eisenstein series is a
Maass form, i.e., an automorphic eigenfunction of the hyperbolic Laplacian on
the upper half-plane. Let
\begin{equation}
\Delta_{H} = -y^2 \left(\frac{\partial^2}{\partial x^2} + \frac{\partial^2}{\partial y^2} \right),
\end{equation}
where $\tau = x+yi$. This operator is invariant under $\textup{SL}(2,\mathbb{Z})$ and
satisfies
\begin{equation}
\Delta_H (\Im \tau)^s = s(1-s)(\Im \tau)^s .
\end{equation}
It follows that the Eisenstein series is also an eigenfunction, with
\begin{equation}\label{eigenE}
\Delta_H E(\tau, s) = s(1-s) E(\tau,s) .
\end{equation}
Now we examine the consequences for the partition function
\begin{equation}
Z = (\Im \tau)^{-c/2} |\eta(\tau)|^{-2c} E(\tau, c/2)  .
\end{equation}
The eigenvalue equation \eqref{eigenE} implies
\begin{equation}\label{eigenZ}
\Delta_H \left( y^{c/2} |\eta(\tau)|^{2c} Z \right) = \frac{c}{2}\left(1- \frac{c}{2}\right) y^{c/2} |\eta(\tau)|^{2c} Z ,
\end{equation}
and the expansion of $Z$ in $\textup{U}(1)^c \times \textup{U}(1)^c$ characters yields
\begin{equation}
|\eta(\tau)|^{2c} Z = \sum_{\ell=-\infty}^{\infty} \int_{|\ell|}^{\infty}  d\Delta\,  e^{-2\pi y \Delta + 2\pi i x \ell}\rho_{\ell}(\Delta) .
\end{equation}
The key identity is
\begin{equation}\label{keyid}
\left(\Delta_H -\frac{c}{2}\left(1-\frac{c}{2}\right)
+ c \Delta \p_\Delta + (\Delta^2-\ell^2)\p_\Delta^2 \right)\left( e^{-2\pi y \Delta + 2 \pi i x \ell}y^{c/2} \right) = 0,
\end{equation}
where $\p_\Delta$ denotes differentiation with respect to $\Delta$.
The operator $c \Delta \p_\Delta + (\Delta^2-\ell^2)\p_\Delta^2$ is
proportional to the Laplacian $\del^2_{\ML}$ on the $(2c-1)$-dimensional
hyperboloid $\ML = \{(u,v) \in (\R^c)^2: |u|^2-|v|^2=2\ell \}$ with the
metric in \eqref{hypermetric}, acting on a function of $|u|$ or equivalently
$\Delta = \frac{1}{2}(|u|^2 + |v|^2) = |u|^2-\ell$. Specifically, the
Laplacian acts on such functions by
\begin{equation}
\del^2_{\ML} \propto - \frac{1}{\sqrt{|G|}} \frac{\p}{\p \alpha} \left( \sqrt{|G|} \frac{\p}{\p \alpha}\right) = -4 (c\Delta \p_\Delta   + (\Delta^2-\ell^2)\p_\Delta^2) .
\end{equation}
Here $\sqrt{|G|} \propto (|u||v|)^{c-1}$ is the volume factor on
$\ML$ obtained below equation \eqref{hypermetric}.\footnote{Equation
\eqref{keyid} is a consequence of Howe duality \cite{MR538856,MR986027}. See
\cite[Section~III.2.3]{MR1151617} for a pedagogical discussion.} Using
\eqref{keyid} in the eigenvalue equation \eqref{eigenZ} and projecting onto
an individual spin $\ell$ gives
\begin{equation}
0 = \int_{|\ell|}^\infty d\Delta \,\rho_\ell(\Delta) \left( c\Delta \p_\Delta   + (\Delta^2-\ell^2) \p_{\Delta}^2 \right) e^{-2\pi y \Delta}  .
\end{equation}
Integrating by parts now yields
\begin{equation} \label{bp2}
\rho_\ell(|\ell|)(2-c)|\ell| e^{-2\pi y |\ell|} + \int_{|\ell|}^{\infty} d\Delta \, (D \rho_\ell(\Delta))e^{-2\pi y\Delta}
 = 0,
\end{equation}
where
\begin{equation}
D = 2-c+(4-c)\Delta \p_\Delta +  (\Delta^2-\ell^2)\p_\Delta^2.
\end{equation}
By acting on \eqref{bp2} with $\p_y + 2\pi |\ell|$, we can remove the first term and obtain
\begin{equation}
 \int_{|\ell|}^{\infty} d\Delta\, (\Delta-|\ell|)(D \rho_\ell(\Delta))e^{-2\pi y\Delta}
 = 0,
\end{equation}
from which we conclude that $D \rho_\ell(\Delta) = 0$. This equation expresses the
requirement that $\rho_\ell(\Delta)$ is proportional to a covariantly constant scalar density
on the hyperboloid $\ML$. The solution to $D \rho_\ell(\Delta) = 0$ that vanishes
at $\Delta = |\ell|$ is  $\rho_\ell(\Delta)\propto(\Delta^2 - \ell^2)^{c/2-1}$,
and the other solution does not satisfy \eqref{bp2} when $\ell\ne0$ because of the boundary term.
(When $\ell=0$, the other solution is $\rho_0(\Delta) \propto 1/\Delta$, which is not
integrable near $\Delta=0$.)
Thus $\rho_\ell(\Delta)$ is proportional to the volume factor $(\Delta^2-\ell^2)^{c/2-1}$,
which is exactly what we needed to conclude that the full spectrum agrees
with the average Narain CFT.

To summarize, the fact that the Eisenstein series is an eigenfunction of the
Laplacian on the upper half plane implies that the spectrum for each $\ell$
is proportional to the volume element on the hyperboloid $\ML$, and then
modular invariance fixes the full spectrum.

\subsection{Comments}
The density of states we have obtained is manifestly positive, unlike the
analogous result in pure gravity \cite{Maloney:2007ud}. We interpret the
continuous spectrum as a consequence of ensemble averaging.  The spectrum
extends all the way down to the unitarity bound $\Delta=|\ell|$, with the
low-energy contributions on the CFT side coming from the decompactification
limit in the Narain moduli space. Note, however, that at large central
charge, it is very rare to find primary states with $\Delta \ll c$ other than
the vacuum.

Non-compact, non-averaged CFTs also have a continuous spectrum. However it
seems impossible to interpret \eqref{scalarrho} in this way, because of the
delta function corresponding to the vacuum state. The vacuum is not present
as a normalizable state in a non-compact CFT, but is present in an averaged
compact CFT.

In \cite{Witten:1988hf} Witten established an exact equivalence between
Chern-Simons gauge theory and rational CFT. With an abelian gauge group,
Witten's correspondence gives a three-dimensional realization of a Narain CFT
at rational points in moduli space (see
\cite{Elitzur:1989nr,Polychronakos:1989cd,Manoliu:1996fx}). The dictionary
for this duality differs from that of AdS/CFT, so it is not a holographic
duality in the usual sense. A direct connection to AdS/CFT was made in
\cite{Gukov:2004id}, where a compact abelian Chern-Simons theory in AdS$_3$
was related to a rational Narain CFT following the usual holographic
dictionary. It is not clear exactly how either of these results is related to
the duality conjectured in the present paper.  Note that before doing the sum over
topologies, $\textup{U}(1)$ gravity is not dual to an individual member of the ensemble of
Narain CFTs, while the construction of \cite{Gukov:2004id} does provide such a duality.
Perhaps this construction can be used to
define alpha states of $\textup{U}(1)$ gravity in the sense of
\cite{Coleman:1988cy,Giddings:1988cx,Marolf:2020xie}.

\bigskip 

\section*{Acknowledgements}

We thank Nathan Benjamin, Scott Collier, Kristan Jensen, David de Laat, Greg Moore, Hirosi Ooguri, Natalie Paquette, Leonardo Rastelli, Peter Sarnak, Eva Silverstein, David Simmons-Duffin, and Douglas Stanford for useful discussions. The work of TH and AT is supported by the Simons
Foundation (Simons Collaboration on the Nonperturbative Bootstrap). NA is
supported by the Leo Kadanoff Fellowship. This work was completed in part
with resources provided by the University of Chicago Research Computing Center.

\appendix

\section{Details of numerical bootstrap}\label{appendix:sdpb}

At a given $\Delta_\textup{gap}$, the infinite set of positivity constraints
\eqref{opos1}--\eqref{opos2} can be recast as a semidefinite program with an
infinite sequence of constraints labeled by spin \cite{Poland:2011ey}. In
practice, a functional satisfying all the constraints can be obtained even if
we truncate to a finite set of spins, such as
\begin{equation}
h - \bh = 0, 1, \dots, L_1, L_2,
\end{equation}
for some large $L_1$, $L_2$. That is, once $\Delta_\textup{gap}$ is tuned to
its optimum, the resulting functional is found to automatically obey the
higher spin constraints that were not included in this list. The
computational problem now takes the standard form of a semidefinite program
that can be optimized by a numerical solver. We use SDPB v1.0
\cite{Simmons-Duffin:2015qma}, which is designed to take advantage of the
special structure in a semidefinite program organized by spin.

We fix $\Delta_\textup{gap}$ and run SDPB to determine whether the
constraints can be satisfied; we then adjust $\Delta_\textup{gap}$ by
bisection to find the optimal bound at truncation order $K$. We have
generated bounds at $K=17$, $19$, $21$, $23$, and $25$ for $1\leq c \leq 15$.
To generate functionals that obey all of the positivity conditions requires
many bisection steps. To save computational time, we ran only $K=19$ at a
high level of rigor: in this case we set $L_1 = 50$, $L_2=100$, and ran a
large number of bisections. The resulting functionals obey all of the
constraints. For other values of $K$, we set $L_1 = 20$, $L_2 = 30$, and ran
fewer bisections. The resulting functionals do not obey all of the
constraints at high spin, but from experience we expect them to be accurate
nonetheless. The numerical functionals at $K=19$ can be downloaded from
\url{https://hdl.handle.net/1721.1/125646}.

The spinning bootstrap is much more computationally intensive than the
spinless bootstrap. This is partly because we are now optimizing over a
two-dimensional space of functionals, and partly because at present there is
no algorithm based on forced roots to bypass linear programming.  We
therefore find a good estimate of the bound only for $c \lesssim 10$, as
compared to $c \lesssim 1000$ for the spinless bounds in previous work
\cite{Afkhami-Jeddi:2019zci,paper1}.

The SDPB settings we used are listed in Table~\ref{table:sdpbsettings}. SDPB
also requires a normalization condition and a set of sampling points. Our
normalization condition sets the coefficient of $f_{1,0}$ to $1$, and the
sampling points are the defaults in the Mathematica package provided with
SDPB.

\begin{table}
\begin{center}
\texttt{
\begin{tabular}{l|l}
findPrimalFeasible & false \\ \hline
findDualFeasible & true \\ \hline
detectPrimalFeasibleJump & false \\ \hline
detectDualFeasibleJump & false \\ \hline
precision & 500 \\ \hline
dualityGapThreshold & 1e-15\\ \hline
primalErrorThreshold & 1e-100 \\ \hline
dualErrorThreshold & 1e-100 \\ \hline
initialMatrixScalePrimal & 1e20 \\ \hline
initialMatrixScaleDual & 1e20 \\ \hline
feasibleCenteringParameter & 0.1 \\ \hline
infeasibleCenteringParameter & 0.3 \\ \hline
stepLengthReduction & 0.7 \\ \hline
choleskyStabilizeThreshold  & 1e-40 \\ \hline
maxComplementarity & 1e80
\end{tabular}
}
\caption{SDPB runtime parameters. \label{table:sdpbsettings}}
\end{center}
\end{table}

\section{Details of optimal Narain lattices}
\label{appendix:narain}

Let $\langle \cdot,\cdot \rangle$ denote the Euclidean inner product on
$\R^{2c}$, and let $[\cdot,\cdot]$ denote the usual bilinear form of
signature $(c,c)$; i.e., $\langle x,x \rangle = \sum_{i=1}^{2c} x_i^2$ and
$[x,x] = \sum_{i=1}^c x_i^2 - \sum_{i=c+1}^{2c} x_i^2$. In this notation, a
Narain lattice is an even unimodular lattice under $[\cdot,\cdot]$, which is
uniquely determined up to the action of $\textup{O}(c,c)$ but can look very different
under $\langle \cdot,\cdot \rangle$.

When one envisions a Euclidean lattice, one typically thinks about it up to
isometries, i.e., up to the action of $\textup{O}(2c)$. From this perspective, it is
not obvious which Euclidean lattices $\Lambda$ satisfy the Narain condition:
the issue is whether the $\textup{O}(2c)$-orbit of $\Lambda$ intersects the
$\textup{O}(c,c)$-orbit of the even unimodular lattice for $[\cdot,\cdot]$. For
comparison, the Leech lattice in $\R^{24}$ does not have this property,
because the spinning modular bootstrap rules it out, and it is a noteworthy
fact that the Coxeter-Todd and Barnes-Wall lattices do. We can verify it
using the following technique, which we will describe more generally in terms
of verifying the output of our computer program.

The output is a floating-point basis $b_1,\dots,b_{2c}$ for the lattice
$\Lambda$, which we would like to convert to an exact description of
$\Lambda$. There is no reason to expect the entries of these vectors to be
recognizable numbers, but the Gram matrix is generally more understandable.
Let $B$ be the matrix with $b_1,\dots,b_{2c}$ as its columns. Then the Gram
matrix of the basis with respect to the Euclidean inner product is
\begin{equation}
G := \big(\langle b_i,b_j \rangle \big)_{1 \le i,j \le 2c} = B^t B,
\end{equation}
and the Gram matrix with respect to $[ \cdot , \cdot]$ is
\begin{equation}
H := \big([ b_i,b_j ] \big)_{1 \le i,j \le 2c} = B^t D B,
\end{equation}
where $D$ is the diagonal matrix with diagonal entries $1, \dots, 1, -1,
\dots, -1$, each repeated $c$ times. By the Narain condition, the entries of
$H$ must be integers, and we can round the floating-point values to obtain
the exact matrix $H$. A priori, there is no reason to expect $G$ to be a
pleasant matrix, but for the best cases we have found with $c \le 8$ it turns
out to be proportional to an integer matrix, and the constant of
proportionality is determined by $\det(G)=1$. Thus, we can exactly identify
$G$ and $H$ in practice. Now the question is whether there is still a lattice
corresponding to these exact matrices, or whether rounding the matrices has
destroyed the lattice. The following lemma shows that the existence of a
lattice basis amounts to checking that $(G H^{-1})^2 = I$, where $I$ is the
identity matrix. Using this technique, one can verify the values of
$\Delta_1$ in Table~\ref{table:narain} rigorously.

\begin{lemma}
Let $G \in \R^{2c \times 2c}$ be a symmetric, positive definite matrix, let
$H \in \R^{2c \times 2c}$ be a symmetric matrix of signature $(c,c)$, and let
$D$ be the diagonal matrix with diagonal entries $1, \dots, 1, -1, \dots,
-1$, each repeated $c$ times. Then there exists a matrix $B \in \R^{2c \times
2c}$ such that $G = B^tB$ and $H = B^t D B$ if and only if $(G H^{-1})^2 =
I$.
\end{lemma}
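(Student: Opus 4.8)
The plan is to prove the two implications separately: the forward direction (``only if'') is an immediate computation, and the reverse direction (``if'') is a reduction to the spectral theorem combined with Sylvester's law of inertia.

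First I would dispatch the ``only if'' direction. Suppose $B$ exists with $G = B^tB$ and $H = B^tDB$. Since $G$ is positive definite it is invertible, hence so is $B$; and since $D^{-1} = D$ we get $H^{-1} = B^{-1}DB^{-t}$. Therefore $GH^{-1} = B^tB\,B^{-1}DB^{-t} = B^tDB^{-t}$, and squaring gives $(GH^{-1})^2 = B^tD^2B^{-t} = B^tB^{-t} = I$.

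For the ``if'' direction, assume $(GH^{-1})^2 = I$. Fix any invertible $R$ with $R^tR = G$ --- for instance a Cholesky factor, or the symmetric positive definite square root $G^{1/2}$ --- which exists because $G$ is positive definite. Note that $H$ is invertible, since a form of signature $(c,c)$ is nondegenerate. Set $\widetilde H := R^{-t}HR^{-1}$, which is symmetric and congruent to $H$, hence of signature $(c,c)$. The key manipulation is $\widetilde H^{-1} = RH^{-1}R^t = R^{-t}\bigl(R^tR\,H^{-1}\bigr)R^t = R^{-t}\,(GH^{-1})\,R^t$, so $\widetilde H^{-1}$ is similar to $GH^{-1}$; consequently $(\widetilde H^{-1})^2$ is similar to $(GH^{-1})^2 = I$, which forces $\widetilde H^2 = I$. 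By the spectral theorem $\widetilde H = P\Lambda P^t$ for an orthogonal $P$ and diagonal $\Lambda$, with every entry of $\Lambda$ in $\{\pm 1\}$ since $\widetilde H^2 = I$, and with exactly $c$ entries equal to $+1$ and $c$ equal to $-1$ since $\widetilde H$ has signature $(c,c)$. Absorbing a permutation matrix into $P$, I obtain an orthogonal $O$ with $\widetilde H = ODO^t$. Finally set $B := O^tR$; then $B^tB = R^tOO^tR = R^tR = G$ and $B^tDB = R^tODO^tR = R^t\widetilde HR = H$, as required.

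The only points needing care --- and neither is a genuine obstacle --- are checking that $H$ is invertible so that $H^{-1}$ and $\widetilde H^{-1}$ make sense, and matching the eigenvalue multiplicities of $\widetilde H$ with those of $D$; the latter is precisely where the signature hypothesis enters, via Sylvester's law of inertia applied to the congruence $\widetilde H = R^{-t}HR^{-1}$.
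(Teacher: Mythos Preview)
Your proof is correct and follows essentially the same route as the paper's: both reduce the ``if'' direction to the spectral theorem applied to a real symmetric matrix that squares to the identity, together with Sylvester's law of inertia to pin down the eigenvalue multiplicities. The paper introduces a second factorization $H=Y^tDY$ and frames the problem as intersecting the $\textup{O}(2c)$-orbit of $X$ with the $\textup{O}(c,c)$-orbit of $Y$, ultimately diagonalizing $ZDZ^t$ with $Z=XY^{-1}$; your $\widetilde H^{-1}=RH^{-1}R^t$ is exactly the same object (their $XH^{-1}X^t$), so you have simply bypassed the orbit language and arrived at the construction slightly more directly.
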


\begin{proof}
By the hypotheses on $G$ and $H$, there exist invertible matrices $X,Y \in
\R^{2c \times 2c}$ such that $G = X^tX$ and $H = Y^t D Y$. Furthermore, these
equations are preserved by acting on $X$ on the left by $\textup{O}(2c)$, or on $Y$ by
$\textup{O}(c,c)$. The question is whether the $\textup{O}(2c)$-orbit of $X$ and the
$\textup{O}(c,c)$-orbit of $Y$ intersect.

If we can take $Y=X$, then
\begin{equation}
(G H^{-1})^2 = X^t X X^{-1} D (X^t)^{-1} X^t X X^{-1} D (X^t)^{-1} = I.
\end{equation}
For the converse, suppose $(GH^{-1})^2=I$. This equation is equivalent to
\begin{equation}
X^t X Y^{-1} D (Y^t)^{-1} X^t X Y^{-1} D (Y^t)^{-1} = I,
\end{equation}
and conjugating by $X^t$ shows that
\begin{equation}
X Y^{-1} D (Y^t)^{-1} X^t X Y^{-1} D (Y^t)^{-1} X^t = I.
\end{equation}
If we let $Z = XY^{-1}$, we find that $(Z D Z^t)^2 = I$. The matrix $Z D Z^t$
is symmetric, and thus by the spectral theorem there exists $U \in \textup{O}(2c)$
such that $ZDZ^t = UD'U^t$, where $D'$ is a diagonal matrix with only $1$ and
$-1$ on the diagonal. By Sylvester's law of inertia, $D$ and $D'$ must have
the same signature, and so we can take $D'=D$ without loss of generality.
Then
\begin{equation}
(U^{-1} Z) D (U^{-1} Z)^t = D,
\end{equation}
which means $U^{-1}Z \in \textup{O}(c,c)$. Because $Z=XY^{-1}$, we have obtained $U
\in \textup{O}(2c)$ and $V := U^{-1}Z \in \textup{O}(c,c)$ such that $U^{-1}X=VY$. Thus, the
$\textup{O}(2c)$-orbit of $X$ intersects the $\textup{O}(c,c)$-orbit of $Y$, as desired.
\end{proof}

In the rest of this appendix, we develop a more conceptual framework for the
Coxeter-Todd and Barnes-Wall lattices as well as more general Narain
lattices. First, we need some notation. We will write vectors in $\R^{2c}$ as
$(x,y)$ with $x,y \in \R^c$, which we interpret as column vectors for matrix
multiplication. The group $\textup{O}(c) \times \textup{O}(c)$ acts on the two components of
vectors in $\R^{2c}$, and it preserves the inner products of signatures
$(2c,0)$ and $(c,c)$. We will use $\langle \cdot,\cdot \rangle$ to denote the
Euclidean inner product on $\R^c$. Then the dual lattice $\Lambda^*$ of a
lattice $\Lambda$ in $\R^c$ is defined by
\begin{equation}
\Lambda^* = \{ x \in \R^c : \textup{$\langle x,y \rangle \in \Z$ for all $y \in \Lambda$}\}.
\end{equation}
Equivalently, if the columns of a $c \times c$ matrix $B$ form a basis for
$\Lambda$, then those of $(B^t)^{-1}$ form a basis of $\Lambda^*$.

The following proposition is a standard result about the Narain condition. It
essentially amounts to the Iwasawa decomposition for $\textup{O}(c,c)$, but we will
give a proof for the convenience of the reader.

\begin{proposition} \label{prop:narainequiv}
A lattice in $\R^{2c}$ satisfies the Narain condition if and only if it is
equivalent under the action of $\textup{O}(c) \times \textup{O}(c)$ to a lattice of the form
\begin{equation}
\left\{ \frac{(u + (M+I)v, \, u +(M-I)v)}{\sqrt{2}}: u \in \Lambda, \, v \in \Lambda^*\right\},
\end{equation}
where $\Lambda$ is a lattice in $\R^c$ and $M$ is a $c\times c$ antisymmetric matrix (i.e., $M^t=-M$).
\end{proposition}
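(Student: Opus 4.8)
The plan is to treat the two implications separately; the ``if'' direction is a direct computation and the ``only if'' direction is where the Iwasawa decomposition enters. For the ``if'' direction, suppose $\Lambda\subseteq\R^c$ is a lattice and $M^t=-M$, and let $L$ be the displayed lattice. First I would check that $[\cdot,\cdot]$ is even on $L$: if $(x,y)=\tfrac1{\sqrt2}(u+(M+I)v,\,u+(M-I)v)$ with $u\in\Lambda$, $v\in\Lambda^*$, then $|x|^2-|y|^2=\tfrac12(|u+(M+I)v|^2-|u+(M-I)v|^2)=2\langle u,v\rangle+2\langle Mv,v\rangle=2\langle u,v\rangle\in2\Z$, using $\langle Mv,v\rangle=0$; the analogous two-vector calculation shows $[\cdot,\cdot]$ is integral on $L$ (the $M$ cross terms again cancel by antisymmetry). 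Next I would compute the covolume: a lattice basis of $L$ has the block form $\tfrac1{\sqrt2}\left(\begin{smallmatrix}B & (M+I)(B^t)^{-1}\\ B & (M-I)(B^t)^{-1}\end{smallmatrix}\right)$ with $B$ a basis matrix of $\Lambda$, and subtracting the first block-row from the second gives determinant $(-1)^c$, so $L$ is unimodular. An even, integral, unimodular lattice is self-dual, so $L$ is a Narain lattice. This direction is entirely routine.

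For the converse, let $N$ be a Narain lattice for the form $Q(x,y)=|x|^2-|y|^2$. Since all Narain lattices of signature $(c,c)$ form a single $\textup{O}(c,c)$-orbit (Section~\ref{ss:naraincomp}), $N$ is isometric over $\Z$ to $U^{\oplus c}$, so it has a $\Z$-basis $e_1,\dots,e_c,f_1,\dots,f_c$ with $[e_i,e_j]=[f_i,f_j]=0$ and $[e_i,f_j]=\delta_{ij}$. The real span $E:=\operatorname{span}_\R\{e_i\}$ is a maximal $Q$-isotropic subspace. Every such subspace is of the form $\{(z,\theta z):z\in\R^c\}$ for some $\theta\in\textup{O}(c)$: projection to the first factor is injective (a vector $(0,y)$ in the subspace would have $Q=-|y|^2$, which must vanish) and surjective by dimension, so the subspace is the graph of a linear map, and isotropy forces the map into $\textup{O}(c)$. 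Applying $(1,\theta^{-1})\in\textup{O}(c)\times\textup{O}(c)$, I may assume $E$ is the diagonal $E_I=\{(z,z)\}$. Because $\operatorname{span}_\Z\{e_i\}$ is primitive in $N$ (its quotient is free on the images of the $f_j$), we get $N\cap E_I=\operatorname{span}_\Z\{e_i\}=\{(z,z):z\in L_1\}$ for a lattice $L_1$; set $\Lambda:=\sqrt2\,L_1$, so $e_i=\tfrac1{\sqrt2}(a_i,a_i)$ with $\{a_i\}$ a basis of $\Lambda$.

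Now I would analyze the quotient. Let $\pi(x,y):=(x-y)/\sqrt2$, so $\ker\pi=E_I$ and $\pi$ identifies $N/(N\cap E_I)$ with the lattice $\pi(N)\subseteq\R^c$. Since $N\cap E_I$ is a primitive isotropic sublattice of half rank in the unimodular lattice $N$, it equals its own orthogonal complement, so $[\cdot,\cdot]$ induces a perfect pairing $(N\cap E_I)\times N/(N\cap E_I)\to\Z$; writing $[(z,z),(x,y)]=\sqrt2\,\langle z,\pi(x,y)\rangle$, this says exactly that $\pi(N)=\Lambda^*$. Writing $f_j=(x_j,y_j)$ and $v_j:=\pi(f_j)$, so $\{v_j\}$ is a basis of $\Lambda^*$, I define $M$ to be the linear map with $Mv_j:=(x_j+y_j)/\sqrt2$; then $f_j=\tfrac1{\sqrt2}((M+I)v_j,(M-I)v_j)$ by construction. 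Antisymmetry of $M$ is forced by the hyperbolic relations: from $[f_i,f_j]=0$ one computes $[f_i,f_j]=\langle Mv_i,v_j\rangle+\langle v_i,Mv_j\rangle$, and since $\{v_j\}$ spans $\R^c$ this gives $M^t=-M$. Assembling the generators $e_i$ and $f_j$ shows $N=L_{\Lambda,M}$, and since $N$ was obtained from the original lattice by an element of $\textup{O}(c)\times\textup{O}(c)$, the proof is complete.

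I expect the main difficulty to be bookkeeping rather than conceptual: keeping the $\tfrac1{\sqrt2}$ normalizations consistent between the diagonal piece and the quotient piece, and making the step ``perfect pairing $\Rightarrow\pi(N)=\Lambda^*$'' precise (it rests on the standard fact that a primitive isotropic half-rank sublattice of a unimodular lattice is its own orthogonal complement). The real hinge is the role of antisymmetry --- trivial as $\langle Mv,v\rangle=0$ in the forward direction, and genuinely forced by $[f_i,f_j]=0$ in the converse --- so I would be careful to point out that an arbitrary lift of a basis of $N/(N\cap E_I)$ would only determine $M$ modulo an integral matrix and need not yield an antisymmetric $M$; it is precisely the existence of a hyperbolic (isotropic) complement, i.e.\ the isometry $N\cong U^{\oplus c}$, that makes the argument work.
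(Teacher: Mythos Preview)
Your proof is correct, and it takes a genuinely different route from the paper's. The paper passes to null coordinates via the block-orthogonal matrix $T$ and then proves an Iwasawa-type factorization of $\textup{O}(c,c)$ (Lemma~\ref{lemma:iwasawa}): every group element is (compact)$\times$(unipotent)$\times$(Levi), and $\Lambda$ and $M$ are then read off from the last two factors acting on $\Z^{2c}$. You instead work intrinsically with the lattice: pick a hyperbolic basis, use a single $\textup{O}(c)$ rotation to send the maximal isotropic span of the $e_i$ to the diagonal, recover $\Lambda$ from $N\cap E_I$ and $\Lambda^*$ from the induced perfect pairing, and define $M$ from the isotropic lifts $f_j$, with $[f_i,f_j]=0$ forcing $M^t=-M$. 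Both arguments ultimately rest on the same input (transitivity of $\textup{O}(c,c)$ on Narain lattices, equivalently $N\cong U^{\oplus c}$), but yours avoids the matrix manipulations needed to produce the orthogonal factors $U,V$ in Lemma~\ref{lemma:iwasawa} and makes the role of the isotropic complement transparent; the paper's group-theoretic version, on the other hand, yields a decomposition of the whole group rather than just the moduli space, which is what is invoked later in Section~\ref{ss:counting}. Your remark that an arbitrary lift would only determine $M$ modulo an integer matrix, and that the isotropic lift is what pins down antisymmetry, is exactly the point and is worth keeping.
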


In the CFT interpretation, $M$ is the flux of the toroidal
compactification, while the choice of metric is absorbed into $\Lambda$ and $\Lambda^*$.
It is not hard to check that such a lattice satisfies the
Narain condition (the key observation is that  $\langle Mv,v \rangle = 0$,
because $M^t=-M$), while the converse is trickier. Both directions follow
from the proof given below.

Note that the action of the $2c \times 2c$ block orthogonal matrix
\begin{equation}
T := \frac{1}{\sqrt{2}} \begin{pmatrix} I & \phantom{-}I\\ I & -I \end{pmatrix}  
\end{equation}
sends
\begin{equation}
\left\{ \frac{(u + (M+I)v, \, u +(M-I)v)}{\sqrt{2}}: u \in \Lambda, \, v \in \Lambda^*\right\}
\end{equation}
to
\begin{equation}
\{ (u + Mv, \, v) : u \in \Lambda, \,v \in \Lambda^*\}
\end{equation}
and vice versa. We will work in these coordinates, because the expressions
involve fewer symbols.

Under the action of $T$, the bilinear form $[\cdot,\cdot]$ with signature
$(c,c)$ is transformed into the form with block matrix
\begin{equation}
\begin{pmatrix} 0 & I\\ I & 0\end{pmatrix}
\end{equation}
with respect to the standard basis of $\R^{2c}$. Equivalently, the vector
$(x,y) \in \R^{2c}$ satisfies
\begin{equation}
[T(x, y),T(x,y)] = 2\langle x,y \rangle.
\end{equation}
In particular, the group $\textup{O}(c,c)$ is conjugate under $T$ to the group
\begin{equation}
G := \left\{ M \in \R^{2c \times 2c} : M^t \begin{pmatrix} 0 & I\\ I & 0\end{pmatrix} M = \begin{pmatrix} 0 & I\\ I & 0\end{pmatrix}\right\}.
\end{equation}

The lattice $\Z^{2c}$ is an even unimodular lattice under this bilinear form,
and thus all that remains is to determine the orbit of $\Z^{2c}$ under $G$.
We can do so using the following lemma.

\begin{lemma} \label{lemma:iwasawa}
Every element of $G$ can be factored as
\begin{equation}
\frac{1}{2} \begin{pmatrix} U+V & U-V \\ U-V & U+V\end{pmatrix} \cdot
\begin{pmatrix} I & M\\ 0 & I\end{pmatrix} \cdot
\begin{pmatrix} A & 0\\ 0 & (A^t)^{-1}\end{pmatrix},
\end{equation}
where $U,V \in \textup{O}(c)$ and $A$ and $M$ are $c \times c$ matrices with $\det A
\ne 0$ and $M^t = -M$.
\end{lemma}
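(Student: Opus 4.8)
The plan is to read the asserted identity as a parabolic (Iwasawa-type) decomposition $G = K\cdot P$, where $K$ consists of the matrices $\tfrac12\left(\begin{smallmatrix} U+V & U-V \\ U-V & U+V \end{smallmatrix}\right)$ with $U,V\in\textup{O}(c)$, and $P$ is the ``Siegel parabolic'' of block upper triangular elements of $G$. Everything then splits into two independent pieces: (i) given $g\in G$, produce $k\in K$ so that $k^{-1}g$ is block upper triangular; and (ii) show every block upper triangular element of $G$ factors as $\left(\begin{smallmatrix} I & M \\ 0 & I \end{smallmatrix}\right)\left(\begin{smallmatrix} A & 0 \\ 0 & (A^t)^{-1} \end{smallmatrix}\right)$ with $M^t=-M$ and $\det A\ne 0$. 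As a preliminary I would check that each of the three model factors really lies in $G$: for $\left(\begin{smallmatrix} I & M \\ 0 & I \end{smallmatrix}\right)$ and $\left(\begin{smallmatrix} A & 0 \\ 0 & (A^t)^{-1} \end{smallmatrix}\right)$ this is immediate from $M^t=-M$, and for $k$ one computes directly that $k^t\left(\begin{smallmatrix} 0 & I \\ I & 0 \end{smallmatrix}\right)k=\left(\begin{smallmatrix} 0 & I \\ I & 0 \end{smallmatrix}\right)$, the cross terms collapsing because $U^tU=V^tV=I$; this also shows $k$ is orthogonal, so $k^{-1}=k^t$ has the same shape with $U,V$ replaced by $U^t,V^t$. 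Writing a general element $g=\left(\begin{smallmatrix} P & Q \\ R & S \end{smallmatrix}\right)$, the equation $g^tJg=J$ unpacks into $P^tR+R^tP=0$, $Q^tS+S^tQ=0$, and $P^tS+R^tQ=I$.

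For step (i), the crucial observation is that since $P^tR+R^tP=0$,
\[
(P+R)^t(P+R)=(P-R)^t(P-R)=P^tP+R^tR=:N,
\]
and $N$ is positive definite: if $Nx=0$ then $|Px|^2+|Rx|^2=x^tNx=0$, so $Px=Rx=0$, hence $g\left(\begin{smallmatrix} x \\ 0 \end{smallmatrix}\right)=0$ and $x=0$ as $g$ is invertible. I would then write $N=C^tC$ with $C$ invertible (e.g. $C=N^{1/2}$), set $U:=(P+R)C^{-1}$ and $V:=(P-R)C^{-1}$, and note $U^tU=(C^{-1})^tNC^{-1}=I$ and likewise for $V$, so $U,V\in\textup{O}(c)$. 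With $k:=\tfrac12\left(\begin{smallmatrix} U+V & U-V \\ U-V & U+V \end{smallmatrix}\right)$, the lower-left block of $k^{-1}g$ is
\[
\tfrac12\bigl(U^t(P+R)-V^t(P-R)\bigr)=\tfrac12(C-C)=0,
\]
using $U^t(P+R)=(C^{-1})^tN=C$ and $V^t(P-R)=(C^{-1})^tN=C$. Since $k,g\in G$, the product $h:=k^{-1}g$ is a block upper triangular element of $G$.

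For step (ii), suppose $h=\left(\begin{smallmatrix} P' & Q' \\ 0 & S' \end{smallmatrix}\right)\in G$. Then $h^tJh=J$ gives $P'^tS'=I$, so $S'=(P'^t)^{-1}$, and $Q'^tS'+S'^tQ'=0$, which says $M':=P'^{-1}Q'$ satisfies $M'^t=-M'$; moreover $P'$ is invertible since $\det h\ne 0$. Taking $A:=P'$ and $M:=P'M'P'^t$ (antisymmetric because $M'$ is), a direct multiplication gives $h=\left(\begin{smallmatrix} I & M \\ 0 & I \end{smallmatrix}\right)\left(\begin{smallmatrix} A & 0 \\ 0 & (A^t)^{-1} \end{smallmatrix}\right)$. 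Applying this to $h=k^{-1}g$ and moving $k$ to the other side yields $g=k\left(\begin{smallmatrix} I & M \\ 0 & I \end{smallmatrix}\right)\left(\begin{smallmatrix} A & 0 \\ 0 & (A^t)^{-1} \end{smallmatrix}\right)$, which is exactly the claimed factorization.

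The computations are all routine linear algebra, and I do not anticipate a serious obstacle; the one genuinely load-bearing idea is the identity $(P+R)^t(P+R)=(P-R)^t(P-R)$, which is precisely what makes a single pair $(U,V)\in\textup{O}(c)^2$ enough to clear the lower-left block. What most needs care is the bookkeeping: verifying that the three model matrices lie in $G$ with the stated constraints, and confirming that block upper triangularity of $k^{-1}g$ is exactly what the $K$-factor buys, so that step (ii) applies.
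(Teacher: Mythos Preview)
Your proof is correct and follows essentially the same route as the paper's. Both arguments hinge on the identity $(P+R)^t(P+R)=(P-R)^t(P-R)=P^tP+R^tR$ (a consequence of $P^tR+R^tP=0$) together with positive definiteness of $P^tP+R^tR$, and both normalize $P\pm R$ by a square root of this matrix to obtain the orthogonal pair $(U,V)$; your step~(ii) for the block upper triangular factor is likewise the same computation as the paper's treatment of the case $C=0$, with $M=Q'P'^t=BD^{-1}$.
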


Each of the three factors comes from a subgroup of $G$.  In particular,
\begin{equation}
\frac{1}{2} \begin{pmatrix} U+V & U-V \\ U-V & U+V\end{pmatrix}  = T \begin{pmatrix} U & 0 \\ 0 & V\end{pmatrix} T
\end{equation}
is conjugate to an element of $\textup{O}(c) \times \textup{O}(c)$ under $T$.

\begin{proof}[Proof of Proposition~\ref{prop:narainequiv}]
Given a factorization as in Lemma~\ref{lemma:iwasawa}, let $\Lambda = A
\Z^c$. Then $\Lambda^* = (A^t)^{-1} \Z^c$, and the image of $\Z^{2c}$ under
\begin{equation}
\begin{pmatrix} I & M\\ 0 & I\end{pmatrix}
\begin{pmatrix} A & 0\\ 0 & (A^t)^{-1}\end{pmatrix}
\end{equation}
is $\{ (u+Mv, \, v) : u \in \Lambda, \,v \in \Lambda^*\}$. The remaining
factor from the lemma is conjugate to an element of $\textup{O}(c) \times \textup{O}(c)$ under
$T$, which completes the proof.
\end{proof}

\begin{proof}[Proof of Lemma~\ref{lemma:iwasawa}]
Let $A,B,C,D$ be $c \times c$ matrices such that
\begin{equation}
\begin{pmatrix} A & B \\ C & D\end{pmatrix}
\end{equation}
is an element of $G$. In other words,
\begin{equation}
\begin{pmatrix} A^t & C^t \\ B^t & D^t\end{pmatrix} \begin{pmatrix} 0 & I\\ I & 0\end{pmatrix} \begin{pmatrix} A & B \\ C & D\end{pmatrix} = \begin{pmatrix} 0 & I \\ I & 0\end{pmatrix},
\end{equation}
which amounts to $A^tC+C^tA=B^tD+D^tB=0$ and $A^tD+C^tB=I$.

First, we deal with the case $C=0$. Then $A^tD=I$ and $D^tB$ is
antisymmetric. Let $M = BD^{-1}$, which is antisymmetric because $D^t B$ is
and $M = (D^{-1})^{t} (D^t  B) D^{-1}$. We conclude that our group element is
given by
\begin{equation}
\begin{pmatrix} A & B \\ 0 & D\end{pmatrix} = \begin{pmatrix} A & M(A^t)^{-1} \\ 0 & (A^t)^{-1}\end{pmatrix} = \begin{pmatrix} I & M\\ 0 & I\end{pmatrix} \begin{pmatrix} A & 0\\ 0 & (A^t)^{-1}\end{pmatrix},
\end{equation}
as desired. In this case, we have only two factors; in other words, the
missing factor is the identity matrix.

All that remains is to show we can make the lower left block of
\begin{equation}
\begin{pmatrix} A & B \\ C & D\end{pmatrix}
\end{equation}
vanish through multiplying on the left by a group element of the form
\begin{equation}
\frac{1}{2} \begin{pmatrix} U+V & U-V \\ U-V & U+V\end{pmatrix}
\end{equation}
with $U,V \in \textup{O}(c)$. The lower left block of the product is
\begin{equation}
\frac{(U-V)A+(U+V)C}{2},
\end{equation}
and so we would like to find $U,V \in \textup{O}(c)$ such that $(U-V)A+(U+V)C=0$.

Because $A^tC+C^tA = 0$, we can obtain $U$ and $V$ such that
$(U-V)A+(U+V)C=0$ by taking $U = A^t+C^t$ and $V=A^t-C^t$, but these matrices
are generally not orthogonal. In particular,
\begin{equation}
UU^t = (A^t+C^t)(A+C) = A^tA + C^tC
\end{equation}
and
\begin{equation}
VV^t = (A^t-C^t)(A-C) = A^tA + C^tC,
\end{equation}
again because $A^tC+C^tA=0$.

We can repair $U$ and $V$ as follows. No nonzero vector can be annihilated by
both $A$ and $C$, because otherwise the matrix
\begin{equation}
\begin{pmatrix} A & B \\ C & D\end{pmatrix}
\end{equation}
would not be invertible. Thus, the symmetric matrix $A^tA + C^tC$ is strictly
positive definite, and so it can be written in the form $X^t X$ for some
invertible matrix $X$. Now let $U = (X^t)^{-1}(A^t+C^t)$ and $V =
(X^t)^{-1}(A^t-C^t)$. Again $(U-V)A+(U+V)C=0$, but now
\begin{equation} \begin{split}
UU^t &= (X^t)^{-1}  (A^t+C^t)(A+C) X^{-1}\\
& = (X^t)^{-1}  (A^tA + C^tC) X^{-1}\\
& = (X^t)^{-1}  X^t X X^{-1} = I,
\end{split} \end{equation}
and similarly $VV^t=I$. Thus, $U,V \in \textup{O}(c)$, as desired.
\end{proof}

One consequence of this characterization of Narain lattices is a lower bound
for the spectral gap, which comes within a factor of $2$ of the bound
obtained in Section~\ref{ss:spectralgap}:

\begin{proposition} \label{prop:narain}
For every positive integer $c$, there exists a Narain CFT with spectral gap
\begin{equation}
\Delta_1 \ge \frac{c}{4\pi e}(1+o(1))
\end{equation}
as $c \to \infty$.
\end{proposition}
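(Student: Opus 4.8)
The plan is to apply the characterization of Narain lattices in Proposition~\ref{prop:narainequiv} with vanishing flux $M = 0$, which turns the construction of a large-gap Narain CFT into the problem of finding a single determinant-$1$ Euclidean lattice $\Lambda\subset\R^c$ both of whose minimum and whose \emph{dual} minimum are large, and then to settle the latter by a Minkowski--Hlawka-type averaging argument run simultaneously on $\Lambda$ and $\Lambda^*$.

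First I would record the relevant special case of Proposition~\ref{prop:narainequiv}. Let $\Lambda \subset \R^c$ have $\det\Lambda = 1$, so that $\det\Lambda^* = 1$ as well, and set $M=0$. Then
\[
\Big\{ \tfrac{1}{\sqrt 2}\,(u+v,\ u-v) : u\in\Lambda,\ v\in\Lambda^*\Big\}
\]
is a Narain lattice of determinant $1$, and since $|u\pm v|^2 = |u|^2 \pm 2\langle u,v\rangle + |v|^2$, the state labelled by $(u,v)$ has scaling dimension $\Delta = \tfrac12\big(|u|^2 + |v|^2\big)$. Taking the minimum over $(u,v)\ne(0,0)$ gives
\[
\Delta_1 = \tfrac12\,\min\big( \lambda_1(\Lambda)^2,\ \lambda_1(\Lambda^*)^2\big),
\]
where $\lambda_1(\cdot)$ denotes the length of a shortest nonzero vector. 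So it suffices to exhibit, for each large $c$, a determinant-$1$ lattice $\Lambda\subset\R^c$ with $\lambda_1(\Lambda)^2$ and $\lambda_1(\Lambda^*)^2$ both at least $\tfrac{c}{2\pi e}(1+o(1))$.

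Next I would run the averaging argument on $\Lambda$ and $\Lambda^*$ at once. By Siegel's mean value theorem \cite{SiegelMVT} (the $d>1$ statement recalled in Section~\ref{s:averaging}), a uniformly random determinant-$1$ lattice $\Lambda$ in $\R^c$ satisfies $\mathbb{E}\big[\#(\Lambda\cap B_r\setminus\{0\})\big] = \vol(B_r)$ for the open ball $B_r$ of radius $r$. The duality map $\Lambda\mapsto\Lambda^*$ is an automorphism of $\textup{SL}(c,\R)/\textup{SL}(c,\Z)$ preserving the Haar probability measure (it is induced by $g\mapsto(g^t)^{-1}$ on $\textup{SL}(c,\R)$, which preserves Haar measure), so $\Lambda^*$ is again uniformly random and the same identity holds for it. Adding the two counts,
\[
\mathbb{E}\big[\#(\Lambda\cap B_r\setminus\{0\}) + \#(\Lambda^*\cap B_r\setminus\{0\})\big] = 2\,\vol(B_r).
\]
If $r$ is chosen with $2\vol(B_r) < 1$, this nonnegative-integer-valued function must vanish on a set of positive measure, so there is a determinant-$1$ lattice $\Lambda$ with $\lambda_1(\Lambda)\ge r$ and $\lambda_1(\Lambda^*)\ge r$. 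Finally, using $\vol(B_r) = V_c\,r^c$ with $V_c = \pi^{c/2}/\Gamma(c/2+1)$ and Stirling's formula $V_c^{1/c} = (1+o(1))\sqrt{2\pi e/c}$, one may take $r^2 = (2V_c)^{-2/c} = \tfrac{c}{2\pi e}(1+o(1))$, since $2^{-2/c}\to 1$. Plugging into the displayed formula for $\Delta_1$ gives $\Delta_1 \ge \tfrac12 r^2 = \tfrac{c}{4\pi e}(1+o(1))$, which is the claim.

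The steps are individually routine; the point that deserves a little care is the claim that passing to the dual lattice preserves the uniform measure on $\textup{SL}(c,\R)/\textup{SL}(c,\Z)$, since that is what lets one apply Siegel's theorem to $\Lambda^*$ with no extra work. Beyond that, one inherits the measure-theoretic technicalities (finiteness of covolume, regularity of the counting measure) already set aside in Section~\ref{s:averaging}. The factor-of-$2$ gap between this bound and the one in Section~\ref{ss:spectralgap} is intrinsic to building the Narain lattice from a $c$-dimensional lattice rather than working in $\R^{2c}$ directly: it is exactly the $\tfrac12$ in $\Delta = \tfrac12(|x|^2+|y|^2)$, which is not lost when one averages over all Narain lattices using the full $\textup{O}(c,c)$ action instead of decoupling into $\Lambda$ and $\Lambda^*$.
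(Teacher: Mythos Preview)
Your proof is correct and follows essentially the same route as the paper: set $M=0$ in Proposition~\ref{prop:narainequiv}, reduce to finding a determinant-$1$ lattice $\Lambda\subset\R^c$ with both $\lambda_1(\Lambda)$ and $\lambda_1(\Lambda^*)$ large, and conclude by an averaging argument based on the Siegel mean value theorem. The paper's proof is terser, merely citing the averaging step, whereas you spell it out and in particular justify the one nontrivial point---that $\Lambda\mapsto\Lambda^*$ preserves Haar measure on $\textup{SL}(c,\R)/\textup{SL}(c,\Z)$---which is exactly what lets you bound the two counts simultaneously. The paper also notes an alternative that sidesteps this: one can take $\Lambda$ to be a self-dual integral lattice (Milnor--Husemoller, Theorem~II.9.5), so that $\Lambda=\Lambda^*$ and only a single Minkowski--Hlawka count is needed.
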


In physics terms, this bound comes from averaging over Narain CFTs with zero flux.

\begin{proof}
To prove this proposition, we will take $M=0$ in
Proposition~\ref{prop:narainequiv} and average over the choice of $\Lambda$.
Taking $M=0$ yields a lattice that is isometric to $\Lambda \times \Lambda^*$
under the Euclidean metric, and thus
\begin{equation}
\Delta_1 = \min \big(\{|x|^2/2 : x \in \Lambda\setminus\{0\}\} \cup \{|y|^2/2 : y \in \Lambda^*\setminus\{0\}\}\big).
\end{equation}
The existence of a lattice $\Lambda$ that makes $\Delta_1 \ge (1+o(1))c/(4\pi
e)$ follows from an averaging argument using the Siegel mean value theorem;
in fact, $\Lambda$ can even be chosen to be a self-dual integral lattice (see
Theorem~9.5 in \cite[Chapter~II]{MH}).
\end{proof}

For comparison, the lattices in Table~\ref{table:narain} with $2 \le c \le 8$
cannot be isometric to lattices of the form $\Lambda \times \Lambda^*$,
because $\Delta_1$ is too large: one of $\Lambda$ or $\Lambda^*$ would
violate the linear programming bound for sphere packing in $\R^c$. The only
way to circumvent this obstacle is to use a nonzero antisymmetric matrix $M$,
and the averaging argument in Section~\ref{ss:spectralgap} takes advantage of
$M$ as well as $\Lambda$.

The Coxeter-Todd and Barnes-Wall lattices can be obtained through
Proposition~\ref{prop:narainequiv}, but the prettiest constructions we have
found use a variant of this construction: the lattice
\begin{equation}
T\{ (u + Mv, \, v) : u \in \Lambda, \,v \in \Lambda^*\}
\end{equation}
is a Narain lattice if and only if $\langle Mv, v \rangle \in \Z$ for all $v
\in \Lambda^*$. This equivalence follows immediately from the formula
$[T(x,y),T(x,y)] = 2\langle x,y \rangle$. If $M$ is antisymmetric, then
$\langle Mv,v\rangle=0$ automatically, while otherwise it is a matter of
compatibility between $M$ and $\Lambda^*$. If $\Lambda^*$ is a rescaling of
an integral lattice, then taking $M$ to be a corresponding multiple of $I$
works, and we can of course add to it any antisymmetric matrix.

To obtain the Barnes-Wall lattice, we start with the $E_8$ root lattice,
which is an even unimodular lattice in $\R^8$. It has the structure of a
module over the Gaussian integers $\Z[i]$; in other words, there exists $J
\in \textup{O}(8)$ such that $J^2=-I$ and multiplication by $J$ preserves $E_8$. If we
let $M = (I+J)/\sqrt{2}$, then
\begin{equation}
T\{(u + Mv, \, v) : u \in 2^{1/4} E_8, \,v \in 2^{-1/4}E_8\}
\end{equation}
is a Narain lattice, and one can check that it is isometric to the
Barnes-Wall lattice (rescaled to have determinant $1$). One can compute
$\Delta_1$ as follows. If we set $u = 2^{1/4} x$ and $v = 2^{-1/4} y$ with
$x,y \in E_8$, then checking that $\Delta_1 = \sqrt{2}$ amounts to showing
that
\begin{equation}
\left| x + \frac{I+J}{2}y \right|^2 + \frac{1}{2}|y|^2 \ge  2
\end{equation}
unless $x=y=0$. If $y=0$ or $|y|^2 \ge 4$, then the inequality trivially
holds, and therefore the interesting case is $|y|^2=2$. In that case,
$(I+J)y$ is a vector of norm $4$ in $E_8$ since $|(I+J)y|^2 =
|y|^2+|Jy|^2=4$, and therefore $(I+J)y/2$ is a deep hole of $E_8$ (see
\cite[p.~121]{SPLAG}), which is at distance $1$ from the nearest points of
$E_8$.

Similarly, the Coxeter-Todd lattice (again rescaled to have determinant $1$)
is given by
\begin{equation}
T\{(u + \sqrt{3} v, \, v) : u \in (4/3)^{1/4} E_6, \,v \in (4/3)^{-1/4} E_6\},
\end{equation}
with no need for an antisymmetric matrix. The remaining case is $c=7$, where
we do not know of a previous occurrence of the best lattice we have found.
It achieves $\Delta_1 = \sqrt{4/3}$ by using
\begin{equation} \begin{split}
\Lambda &= (2/3^{1/4}) D_7^*\\
&= (2/3^{1/4}) \big(\Z^7 \cup \big(\Z^7 + (1/2,1/2,\dotsc,1/2)\big)\big)
\end{split} \end{equation}
and the antisymmetric matrix
\begin{equation}
M = \frac{1}{\sqrt{3}}\begin{pmatrix}
\phantom{-}0 & \phantom{-}1 & \phantom{-}1 & \phantom{-}1 & -1 & \phantom{-}1 & \phantom{-}1\\
-1 & \phantom{-}0 & \phantom{-}1 & -1 & \phantom{-}1 & \phantom{-}1 & \phantom{-}1\\
-1 & -1 & \phantom{-}0 & \phantom{-}1 & \phantom{-}1 & \phantom{-}1 & -1\\
-1 & \phantom{-}1 & -1 & \phantom{-}0 & \phantom{-}1 & -1 & \phantom{-}1\\
\phantom{-}1 & -1 & -1 & -1 & \phantom{-}0 & \phantom{-}1 & \phantom{-}1\\
-1 & -1 & -1 & \phantom{-}1 & -1 & \phantom{-}0 & \phantom{-}1\\
-1 & -1 & \phantom{-}1 & -1 & -1 & -1 & \phantom{-}0
\end{pmatrix}.
\end{equation}

In the above constructions, we built the Coxeter-Todd and Barnes-Wall
lattices using rescalings of $E_6$ and $E_8$, respectively, but taking $M$ to
be a linear combination of the identity matrix and an antisymmetric matrix. In
fact, the use of the identity matrix is unnecessary: one can use exactly the
same $c$-dimensional lattices, and replace $M$ with an antisymmetric matrix.

\section{The Hardy-Littlewood circle method} \label{app:circle}

The remaining circle method calculations work as follows, in the notation of
Section~\ref{ss:counting}. Recall that we are trying to approximate the
integrand
\begin{equation}
 \sum_{(x,y) \in B_r \cap (\Z^c)^2} e^{2\pi i (x \cdot y - \ell) w}
\end{equation}
and integrate it over the major arcs, which consist of the $w$ satisfying
\begin{equation}
 \left|w-\frac{a}{b}\right| \le \frac{1}{r^{2-\varepsilon}}
\end{equation}
for rationals $a/b$ with $1 \le b \le r^\varepsilon$.

If $w = a/b+u$ with $u$ small, we can decompose our sum into residue classes
modulo $b$ and write the integrand as
\begin{equation}
\sum_{(x,y) \in B_r \cap (\Z^c)^2} e^{2\pi i (x \cdot y - \ell) w} = \sum_{(\bar{x},\bar{y}) \in (\Z/b\Z)^{2c}} e^{2\pi i (\bar{x} \cdot \bar{y} - \ell) a/b} \sum_{\substack{(x,y) \in B_r \cap (\Z^c)^2 \\ (x,y) \equiv (\bar{x},\bar{y}) \pmod{b}}} e^{2\pi i (x \cdot y- \ell) u}.
\end{equation}
(Here we use the fact that $e^{2\pi i (x \cdot y - \ell) a/b}$ depends only
on $x \cdot y$ modulo $b$.) Because $u$ is small, we can approximate the last
sum by an integral, to obtain
\begin{equation}
\sum_{\substack{(x,y) \in B_r \cap (\Z^c)^2 \\ (x,y) \equiv (\bar{x},\bar{y}) \pmod{b}}} e^{2\pi i (x \cdot y- \ell) u}  \sim \frac{1}{b^{2c}} \int_{(x,y) \in B_r} dx\, dy\, e^{2\pi i (x \cdot y- \ell) u}.
\end{equation}
If we let
\begin{equation}
S(a,b) =  \sum_{(\bar{x},\bar{y}) \in (\Z/b\Z)^{2c}} e^{2\pi i (\bar{x} \cdot \bar{y} - \ell) a/b},
\end{equation}
then the integral over the major arc at $a/b$ is asymptotic to
\begin{equation}
\frac{S(a,b)}{b^{2c}} \int_{|u| \le r^{\varepsilon-2}} du  \int_{(x,y) \in B_r} dx\, dy\, e^{2\pi i (x \cdot y- \ell) u}.
\end{equation}
As $r \to \infty$, replacing $u$ with $ur^2$ and $(x,y)$ with $(x,y)/r$ yields
\begin{equation}
\frac{S(a,b)}{b^{2c}} r^{2c-2} \int_{\R} du  \int_{(x,y) \in B_1} dx\, dy\, e^{2\pi i (x \cdot y) u}.
\end{equation}
Aside from justifying the quality of these approximations,\footnote{Recall
that the result is not even true when $c \le 2$.} we have shown that
$\#\{(x,y) \in B_r \cap (\Z^c)^2 : x \cdot y = \ell\} $ is asymptotic to
\begin{equation}
\sum_{b \ge 1} \sum_{\substack{1 \le a \le b\\ \gcd(a,b)=1}} \frac{S(a,b)}{b^{2c}} \sigma_\infty(B_r)
\end{equation}
as $r \to \infty$.

Let
\begin{equation}
S(b) = \sum_{\substack{1 \le a \le b\\ \gcd(a,b)=1}}S(a,b).
\end{equation}
All that remains is to justify that
\begin{equation}
\sum_{b \ge 1} \frac{S(b)}{b^{2c}} = \prod_{\text{$p$ prime}} \sigma_p,
\end{equation}
where
\begin{equation}
\sigma_p = \lim_{n \to \infty} \frac{\#\{(\bar{x},\bar{y}) \in \big((\Z/p^n\Z)^{c}\big)^2 : \bar{x}\cdot \bar{y} \equiv  \ell \pmod{p^n}\}}{p^{(2c-1)n}}.
\end{equation}
By the Chinese remainder theorem, $S(bb')=S(b)S(b')$ whenever $\gcd(b,b')=1$.
Thus, factoring into prime powers reduces what we need to prove to the case
\begin{equation}
\sum_{k \ge 0} \frac{S(p^k)}{p^{2kc}} = \sigma_p
\end{equation}
for $p$ prime. To obtain this formula, we write the partial sums as
\begin{equation} \begin{split}
\sum_{k = 0}^n \frac{S(p^k)}{p^{2kc}} &= \sum_{k=0}^n \sum_{\substack{1 \le a \le p^k\\ \gcd(a,p^k)=1}} \frac{1}{p^{2kc}} \sum_{(\bar{x},\bar{y}) \in (\Z/p^k\Z)^{2c}} e^{2\pi i (\bar{x} \cdot \bar{y} - \ell) a/p^k}\\
&= \sum_{k=0}^n \sum_{\substack{1 \le a \le p^k\\ \gcd(a,p^k)=1}} \frac{1}{p^{2nc}} \sum_{(\bar{x},\bar{y}) \in (\Z/p^n\Z)^{2c}} e^{2\pi i (\bar{x} \cdot \bar{y} - \ell) a/p^k}
\end{split} \end{equation}
and then set $a'=ap^{n-k}$ and use the identity
\begin{equation}
\sum_{a=1}^b e^{2\pi i ma/b} = \begin{cases}
b & \text{if $m$ is a multiple of $b$, and}\\
0 & \text{otherwise}
\end{cases}
\end{equation}
to conclude that
\begin{equation} \begin{split}
\sum_{k = 0}^n \frac{S(p^k)}{p^{2kc}} &=\frac{1}{p^{2nc}} \sum_{a'=1}^{p^n}  \sum_{(\bar{x},\bar{y}) \in (\Z/p^n\Z)^{2c}} e^{2\pi i (\bar{x} \cdot \bar{y} - \ell) a'/p^n}\\
&= \frac{\#\{(\bar{x},\bar{y}) \in \big((\Z/p^n\Z)^{c}\big)^2 : \bar{x} \cdot \bar{y} \equiv \ell \pmod{p^n}\}}{p^{(2c-1)n}} ,
\end{split} \end{equation}
as desired.

\section{Counting solutions modulo prime powers} \label{app:counting}

Let $p$ be a prime and $\ell$ be any integer, and let
\begin{equation}
V(p^n,\ell) = \{(x,y) \in \big((\Z/p^n\Z)^{c}\big)^2 : x\cdot y \equiv  \ell \pmod{p^n}\}.
\end{equation}
In this appendix we compute
\begin{equation}
\sigma_p = \lim_{n \to \infty} \frac{\# V(p^n,\ell)}{p^{(2c-1)n}}.
\end{equation}
Note that this scaling is sensible, since we are putting one constraint
modulo $p^n$ on $2c$ variables, but the constant of proportionality
$\sigma_p$ will depend on $p$ and $\ell$.

We will show that $\# V(p^n,\ell)$ satisfies the recurrence relation
\begin{equation}
\# V(p^n,\ell)  = \big(p^{cn} - p^{c(n-1)}\big)p^{(c-1)n} + p^c \#V(p^{n-1},\ell/p),
\end{equation}
where we interpret $\# V(p^{n-1},\ell/p)$ as $0$ if $\ell$ is not divisible
by $p$. Once we have the recurrence, we find that
\begin{equation}
\frac{\# V(p^n,\ell)}{p^{(2c-1)n}}  = 1-p^{-c} + p^{-(c-1)} \frac{\#V(p^{n-1},\ell/p)}{p^{(2c-1)(n-1)}},
\end{equation}
and it follows immediately that
\begin{equation} \begin{split}
\sigma_p &= (1-p^{-c})\sum_{i=0}^k p^{-(c-1)i} \\
&= \frac{(1-p^{-c})(1-p^{-(c-1)(k+1)})}{1-p^{-(c-1)}}
\end{split} \end{equation}
when $\ell$ is divisible by $p^k$ but no higher power of $p$, where we take
$k=\infty$ and $p^{-(c-1)(k+1)}=0$ if $\ell=0$.

To prove the recurrence, we divide into two cases. Suppose first that $x
\not\equiv 0 \pmod{p}$. Every integer not divisible by $p$ is a unit modulo
$p^n$ (i.e., it has a multiplicative inverse modulo $p^n$), and so some
coordinate of $x$ is a unit, say $x_i$. Then we can choose the other
coordinates $y_j$ of $y$ arbitrarily, and achieve $x \cdot y \equiv \ell
\pmod{p^n}$ through a unique choice for $y_i$, namely
\begin{equation}
y_i \equiv x_i^{-1} \left(\ell - \sum_{j \ne i} x_j y_j\right) \pmod{p^n}.
\end{equation}
There are $p^{cn} - p^{c(n-1)}$ choices of $x$ that are not divisible by $p$,
and each of them has $p^{(c-1)n}$ choices of $y$ that work with it. Thus,
there are $\big(p^{cn} - p^{c(n-1)}\big)p^{(c-1)n}$ solutions to $x \cdot y
\equiv \ell \pmod{p^n}$ with $x \not\equiv 0 \pmod{p}$.

The remaining case is when $x \equiv 0 \pmod{p}$. In that case, let $x = p
x'$, where $x'$ is defined modulo $p^{n-1}$. The only way we can have $x
\cdot y \equiv \ell \pmod{p^n}$ is if $\ell$ is divisible by $p$. If so, for
each $y'$ modulo $p^{n-1}$ satisfying
\begin{equation}
x' \cdot y' \equiv \ell/p \pmod{p^{n-1}},
\end{equation}
there are $p^c$ choices of $y$ modulo $p^n$ that reduce to $y'$ modulo
$p^{n-1}$ (namely, $y' + p^{n-1}z$ for any vector $z$ modulo $p$), and there
are therefore $p^c \# V(p^{n-1},\ell/p)$ solutions with $x \equiv 0
\pmod{p}$. Thus, the recurrence relation holds.

\end{spacing}

\providecommand{\href}[2]{#2}\begingroup\raggedright\endgroup

\end{document}